\documentclass[aps,pra,twocolumn,superscriptaddress,nofootinbib]{revtex4-2}

\usepackage{graphicx}
\usepackage{amsmath,amssymb,amsfonts}
\usepackage{bm}
\usepackage{amsthm}
\usepackage{quantikz}
\usepackage{hyperref}
\usepackage{mathrsfs}
\usepackage{braket}
\usepackage{xcolor}
\usepackage{bbold}
\usepackage{textcomp}
\usepackage{booktabs}
\usepackage{algorithm}
\usepackage[noend]{algpseudocode}
\usepackage{algorithmicx}
\usepackage{listings}

\newtheorem{theorem}{Theorem}

\newtheorem{lemma}[theorem]{Lemma}
\newtheorem{corollary}[theorem]{Corollary}
\newtheorem{definition}{Definition}

\begin{document}

\title{Efficient Many-Body Shadow Metrology via Clifford Lensing}

\author{Sooryansh Asthana}
\affiliation{Department of Physics, Indian Institute of Technology Bombay, Maharashtra 400076, India}

\author{Conan Alexander}
\affiliation{Department of Physics and NMR Research Center, Indian Institute of Science Education and Research, Pune 411008, India}

\author{Anubhav Kumar Srivastava}
\affiliation{ICFO - Institut de Ciencies Fotoniques, The Barcelona Institute of Science and Technology, Castelldefels (Barcelona) 08860, Spain}

\author{T. S. Mahesh}
\affiliation{Department of Physics and NMR Research Center, Indian Institute of Science Education and Research, Pune 411008, India}

\author{Sai Vinjanampathy}
\email{sai@phy.iitb.ac.in}
\affiliation{Department of Physics, Indian Institute of Technology Bombay, Maharashtra 400076, India}
\affiliation{Centre of Excellence in Quantum Information, Computing, Science and Technology, IIT Bombay, Maharashtra 400076, India}

\begin{abstract}
Quantum probes that enable enhanced exploration and characterization of complex systems are central to modern science, spanning applications from biology to astrophysics and chemical design. In large many-body quantum systems, interactions delocalize phase information across many degrees of freedom, dispersing it away from accessible measurements and limiting the scalability of quantum metrology.
Here we show that experimentally accessible Clifford operations acting jointly on quantum states and observables can refocus this distributed information. These operations implement what we term {\it Clifford lensing}--transformations that coherently localize phase information onto a reduced set of degrees of freedom, mapping optimal measurements onto observables of reduced Pauli weight.
We establish a correspondence between quantum error-correcting codes and interferometric constructions that enforce deterministic phase kickback, and generalize this to circuits that concentrate many-body phase information onto a controllable subset of qubits. We further develop partial shadow tomography protocols for estimating subsystem-supported phases.
We experimentally demonstrate these principles in liquid-state nuclear magnetic resonance systems of up to fifteen qubits, achieving optimal sensing with constrained resources.
Our results establish a scalable route to coherent control of information flow in interacting quantum systems, enabling many-body quantum sensing and multimode interferometry across complex architectures.
\end{abstract}

\maketitle

\section{Introduction}
\label{sec: Introduction}

Quantum advantage often requires many-body measurements. Examples include syndrome measurements in quantum error correction, collective measurements in communication protocols, and optimal measurements in quantum metrology \citep{chiaverini2004realization, jansen2022enumerating, Braunstein_Caves_94}. In quantum metrology, optimally chosen probe states and measurements can saturate the Cramér–Rao bound on the variance of an unbiased estimator, with ultimate sensitivity set by the quantum Fisher information (QFI) \citep{sidhu2020geometric}.
For general interacting many-body systems, however, the measurements required to achieve this bound—that is, projections onto the eigenstates of the symmetric logarithmic derivative (SLD)—are highly nonlocal and explicitly parameter-dependent \citep{Braunstein_Caves_94}. Such measurements involve nonlocal observables and are therefore infeasible on realistic platforms. Consequently, experimental quantum metrology has relied on intuition-driven measurement design, constrained by platform-specific limits and control resources \citep{ding2022enhanced, colombo2022time, yin2023experimental, deng2024quantum, guo2020distributed, giovannetti2011advances, PhysRevLett.116.053601}. Performance is assessed \textit{ex post facto}, by demonstrating advantage over classical strategies and proximity to theoretical bounds. This has created a persistent gap between formal optimality, captured by the SLD, and practical implementability.

In analogy with classical interferometry, where optical elements redistribute phase information across modes, one may ask whether a corresponding transformation exists in quantum systems that concentrates metrologically relevant information into accessible degrees of freedom. This motivates the following question: given a probe state and a phase-embedding generator, can one systematically transform the optimal but inaccessible SLD measurement into an implementable surrogate without sacrificing the QFI?
We answer this question in the affirmative by applying experimentally accessible Clifford operations jointly to the SLD and the phase-embedded quantum state—a procedure we term {\it Clifford lensing}. These transformations coherently localize phase information onto a reduced set of degrees of freedom while preserving expectation values and the QFI. Operationally, they redistribute phase information via phase kickback, thereby concentrating it onto a smaller number of qubits. As a result, the entanglement structure of the SLD is reorganized, yielding measurements that are experimentally accessible yet retain metrological optimality.

We quantify measurement complexity using Pauli weight. Under appropriate Clifford transformations, nonlocal correlations in the SLD and the phase-embedded state are reorganized so that phase information is concentrated onto fewer qubits, reducing the complexity of the measurement operator while preserving sensitivity. When the probe state and generator admit a stabilizer description, Clifford lensing can be performed analytically. In this regime, standard protocols such as Ramsey interferometry admit a natural interpretation in terms of stabilizer quantum-error-correcting codes. More generally, our framework yields families of metrology protocols inspired by error correction, in which new phase kickback schemes can be constructed by augmenting existing codes with Clifford operations.
Clifford lensing also has important implications for classical shadow tomography in metrology \citep{huang2020predicting}. Although classical shadows have been demonstrated across many platforms \citep{huggins2022unbiasing,gandhari2024precision,zhang2021experimental,stricker2022experimental,struchalin2021experimental,kim2025distributed}, the variance of shadow estimators scales exponentially with the maximum Pauli weight of the target operator \citep{huang2020predicting}, rendering direct estimation of eigenstates of the SLD infeasible. Clifford lensing addresses this obstacle by concentrating phase information into lower-weight observables compatible with efficient shadow estimation. In this sense, it concentrates non-Clifford resources onto a reduced subsystem, enabling scalable shadow-based metrology.

Two further challenges arise in implementing shadow tomography under realistic constraints: most sensing platforms offer only restricted unitary operations and measurement sets. We address this by identifying a class of quantum channels that preserve all information relevant to parameter estimation. We term these {\it metrologically sufficient} channels and show that, although not tomographically complete, they preserve the QFI. This enables efficient estimation of metrological observables without full state reconstruction. We experimentally demonstrate our framework on a liquid-state nuclear magnetic resonance (NMR) platform, performing shadow-based quantum metrology on systems of up to fifteen qubits. These results establish a concrete route from many-body SLD optimality to Clifford-accessible, shadow-compatible measurements, thereby narrowing the gap between theoretical optimality and experimental feasibility.

The remainder of this paper is organized as follows. In Sec.~\ref{Clifford lensing of SLDs}, we introduce Clifford lensing and analyze its effect on the SLD. In Sec.~\ref{Shadow with Ensemble Measurements}, we develop metrologically sufficient channels and partial shadow tomography protocols under restricted control. In Sec.~\ref{Sec: NMR Based shadow metrology}, we present an experimental realization using an NMR platform. We conclude in Sec.~\ref{Outlook}.

\section{Clifford lensing of  many-body symmetric logarithmic derivatives}
\label{Clifford lensing of  SLDs}
 {
Consider a local quantum metrology protocol defined by a pure $n$-party probe state
$\rho_\theta = U_\theta \rho_0 {\overline U}_\theta$ with parameter embedding unitary
$U_\theta = \exp({-i\theta G}),$ generated by $G$ and an initial state $\rho_0=\ket{\psi_0}\!\bra{\psi_0}$.
The ultimate precision for local estimation of $\theta$ is quantified by the {QFI}, $\mathcal F_Q[\rho_\theta]=\mathrm{Tr}(\rho_\theta \mathcal L_\theta^2),$
where $\mathcal L_\theta$ is the SLD.
Throughout, we define a \emph{metrology protocol} as the pair $\{\ket{\psi_0},G\}$ and assume that the optimal SLD measurement is implemented, so all protocols considered are Fisher-saturating. For pure states, $\mathcal L_\theta
=2\bigl(\ket{\partial_\theta\psi_\theta}\bra{\psi_\theta}
+\ket{\psi_\theta}\bra{\partial_\theta\psi_\theta}\bigr),$ with 
$\ket{\partial_\theta\psi_\theta}=-iG\ket{\psi_\theta}.$
At $\theta=0$, the SLD reduces to the {rank-two operator}
$\mathcal L
=2i\bigl(\ket{\psi_0}\bra{\chi}-\ket{\chi}\bra{\psi_0}\bigr),$
where the state $|\chi\rangle$ is given by
 $\ket{\chi}
={(G-\langle G\rangle_{\psi_0})\ket{\psi_0}}/
({{\mathrm{Var}_{\psi_0}(G)}})^{1/2}.$
The eigenstates of ${\cal L}$ are
$\ket{\lambda_\pm}=(\ket{\psi_0}\pm i\ket{\chi})/\sqrt{2}$,
which define the optimal measurement (see Appendix (\ref{App: Complexity of measurement of SLD})). Direct measurement in the $\{\ket{\lambda_\pm}\}$ basis is typically intractable for many-body probes due to the high Pauli weight of the corresponding projections.  

We now present our first main result, {\it viz.}, Clifford lensing of a metrology protocol. We apply a $\theta$-independent Clifford isometry $C$ prior to measurement, inducing the transformations $\mathcal L \;\mapsto\; \mathcal L_C = C \mathcal L \overline{C} ,~~
\rho_\theta \;\mapsto\; C \rho_\theta \overline{C}.$  Clifford lensing corresponds to the existence of such an isometry $C$ that localizes all parameter dependence onto a reduced subsystem, i.e., for which, in a neighborhood of $\theta=0$,
\begin{align}
C\rho_\theta {\overline C}=\rho^{(k)}_\theta \otimes \rho^{(n-k)}_{\rm aux}, ~~C{\cal L}{\overline C}=\mathcal L^{(k)} \otimes \mathcal {\cal L}^{(n-k)},
\end{align}
where  all dependence on the embedded phase  is confined to the $k$-party subsystem. The auxiliary state {$\rho^{(n-k)}_{\rm aux}$} is independent of $\theta$, satisfying $\partial_\theta \rho^{(n-k)}_{\rm aux} = 0.$ Consequently, the auxiliary degrees of freedom carry no metrological information, and the protocol is effectively reduced to a lower-dimensional metrology problem described by $\{\rho^{(k)}_\theta,\mathcal L^{(k)}\}$. While such a factorized structure does not arise universally, we show that it can be realized in broad and physically relevant classes of metrological protocols, where Clifford transformations reorganize the probe degrees of freedom so as to isolate the parameter dependence into a reduced effective subsystem. 

  Clifford conjugation preserves the Pauli group and thus a carefully chosen Clifford conjugation may enable a reduction of the Pauli weight of the SLD without affecting either {the QFI} or the optimality of the measurement.
Crucially, the SLD has support on a two-dimensional subspace,
{$\mathcal H_{\mathrm{SLD}}
=\mathrm{span}\{\ket{\psi_0},\ket{\chi}\},$}
so every protocol admits a reduction to a single {logical qubit} embedded in the full Hilbert space. Clifford lensing corresponds to choosing a Clifford isometry that localizes this logical qubit onto a small set of physical degrees of freedom.  Clifford lensing is distinct from the Clifford cooling protocols employed in  \citep{Martina_Frau_25, Fux2025, Masot-Llima24, Qian24, Huang25} which are solely designed to disentangle a quantum state via Clifford transformations. In contrast, Clifford lensing aims to localize the embedded phase to a smaller number of subsystems using Clifford gates. We show below that optimizing over Clifford gate sets separates metrology protocols into two families, {\it viz.},  stabilizer-compatible and stabilizer-incompatible. The classification is based on whether or not the eigenstates of the SLD and parametrized quantum state can be transformed to an effective single qubit system by Clifford lensing.  

}

\subsection{Stabilizer-compatible metrology protocols}
A metrology protocol $\{\ket{\psi_0},G\}$ is said to be \emph{stabilizer-compatible} if its phase information can be localized to a single qubit purely via Clifford transformations. In this case, the SLD subspace is mapped by Clifford conjugation to the {codespace of an $[n,1]$ stabilizer code}, with $G$ acting as a logical Pauli operator on the encoded qubit. The optimal SLD measurement can be implemented using Clifford circuits alone. This establishes a direct correspondence between stabilizer-compatible metrology protocols and deterministic Clifford phase kickback, which we formalize in the following theorem.
\begin{theorem}[Deterministic Clifford phase kickback]
Let $\mathcal{H}_P = (\mathbb{C}^2)^{\otimes m}$ be a physical Hilbert space.
The following statements are equivalent:
\begin{enumerate}
\item \textbf{Deterministic Clifford phase kickback.}
There exist orthonormal states $\ket{\phi_0}, \ket{\phi_1} \in \mathcal{H}_P$,
a physical Pauli operator $P$, and a Clifford unitary $C$ such that,
for all $\alpha, \beta \in \mathbb{C}$ and all $\theta \in \mathbb{R}$,
\begin{equation}
C e^{-i \theta P}
\left( \alpha \ket{\phi_0} + \beta \ket{\phi_1} \right)
=
\left( e^{-i \theta Z} (\alpha \ket{0} + \beta \ket{1}) \right)
\otimes \ket{\mathrm{aux}},
\label{eq:clifford-kickback}
\end{equation}
where $\ket{\mathrm{aux}}$ is a fixed auxiliary state independent of
$\alpha$, $\beta$, and $\theta$.

\item \textbf{Encoded Pauli logical qubit.}
The subspace $\mathcal{H}_L = \operatorname{span}\{\ket{\phi_0}, \ket{\phi_1}\}
\subset \mathcal{H}_P$
 defines a two-dimensional codespace supporting a logical Pauli algebra
$\{X_L, Z_L\}$, such that the physical Pauli operator $P$ acts as the logical
operator $Z_L$ on $\mathcal{H}_L$ (up to a stabilizer).
Moreover, $\mathcal{H}_L$ arises as the codespace of a stabilizer code encoding
a single logical qubit, possibly up to gauge degrees of freedom.
\end{enumerate}
\end{theorem}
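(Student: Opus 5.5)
We will prove the two implications separately, working throughout in the Heisenberg picture of Clifford conjugation. The key structural fact is that a Clifford unitary maps Pauli operators to Pauli operators (up to phase). It therefore maps stabilizer codes to stabilizer codes and logical Paulis to logical Paulis.

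\emph{(1)$\Rightarrow$(2).} Setting $\theta=0$ in Eq.~(\ref{eq:clifford-kickback}) and using linearity in $\alpha,\beta$ gives
\[
C\ket{\phi_0}=\ket{0}\otimes\ket{\mathrm{aux}},\qquad C\ket{\phi_1}=\ket{1}\otimes\ket{\mathrm{aux}}.
\]
Hence $\mathcal H_L=\overline{C}\,(\mathbb{C}^2\otimes\ket{\mathrm{aux}})$.

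Next, differentiate Eq.~(\ref{eq:clifford-kickback}) at $\theta=0$. This yields
\[
CP\overline{C}\,(\ket{v}\otimes\ket{\mathrm{aux}})=(Z\ket{v})\otimes\ket{\mathrm{aux}}
\]
for all $\ket{v}$. So $P$ acts as $\overline{C}(Z\otimes\mathbb 1)C$ on $\mathcal H_L$.

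The remaining step is to show that $\ket{\mathrm{aux}}$ can be taken to be a stabilizer state. We would then choose stabilizer generators $S_j$ of $\ket{\mathrm{aux}}$. The operators $\overline{C}(\mathbb 1\otimes S_j)C$ are Pauli operators stabilizing exactly $\mathcal H_L$, and $X_L=\overline{C}(X\otimes\mathbb 1)C$ and $Z_L=\overline{C}(Z\otimes\mathbb 1)C$ are Pauli logicals.

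Since $CP\overline{C}$ is itself a Pauli operator $Q$ that acts as $Z\otimes\mathbb 1$ on $\mathbb{C}^2\otimes\ket{\mathrm{aux}}$, we can decompose $Q=Q_1\otimes Q_2$. Then $Q_1=\pm Z$ up to phase, and $Q_2\ket{\mathrm{aux}}\propto\ket{\mathrm{aux}}$. This shows $P=Z_L$ up to a stabilizer element.

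If $\ket{\mathrm{aux}}$ is not a stabilizer state, the statement's clause ``possibly up to gauge degrees of freedom'' covers this case. We would take the stabilizer group to be the Paulis $\overline{C}(\mathbb 1\otimes S)C$ that fix $\ket{\mathrm{aux}}$. The complementary degrees of freedom are then treated as gauge, giving a subsystem code whose bare logical qubit is the first tensor factor.

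\emph{(2)$\Rightarrow$(1).} Let $\mathcal H_L$ be the codespace of an $[m,1]$ stabilizer (or subsystem) code with stabilizer group $\mathcal S$ and Pauli logicals $X_L,Z_L$, where $P=Z_L s$ for some $s\in\mathcal S$. We invoke the standard encoding theorem for stabilizer codes. Using symplectic Gram--Schmidt on the generators of $\mathcal S$ together with $X_L,Z_L$, there is a Clifford $C$ satisfying
\[
C S_j\overline{C}=Z_{j+1}\ (j=1,\dots,m-1),\qquad CZ_L\overline{C}=Z_1,\qquad CX_L\overline{C}=X_1,
\]
with sign corrections absorbed by Pauli frame gates.

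It follows that $C\mathcal H_L=\mathbb{C}^2\otimes\ket{0}^{\otimes(m-1)}$. We then choose $\ket{\phi_0},\ket{\phi_1}$ to be the $\pm1$ eigenvectors of $Z_L$ in the codespace, with relative phase fixed by $X_L$, so that $C\ket{\phi_i}=\ket{i}\ket{0\cdots0}$.

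Because $s$ acts as the identity on $\mathcal H_L$, and $e^{-i\theta P}$ preserves $\mathcal H_L$ since $P$ commutes with $\mathcal S$, we have $e^{-i\theta P}=e^{-i\theta Z_L}$ on $\mathcal H_L$. Conjugating by $C$ gives Eq.~(\ref{eq:clifford-kickback}) with $\ket{\mathrm{aux}}=\ket{0}^{\otimes(m-1)}$.

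\emph{Main obstacle.} We expect the main difficulty to be the rigidity in (1)$\Rightarrow$(2): a non-stabilizer $\ket{\mathrm{aux}}$ is permitted by (1), and the step most likely to require care is handling it. The gauge/subsystem reading resolves this, but the bookkeeping of phases and signs ($\pm$, $\pm i$) in identifying $P$ with $Z_L$ ``up to a stabilizer'' also needs attention.
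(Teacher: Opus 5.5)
Your proof is correct. For (2)$\Rightarrow$(1) it is the paper's argument: a Clifford decoder with $C\ket{b}_L=\ket{b}\otimes\ket{0}^{\otimes(m-1)}$, together with $e^{-i\theta P}=e^{-i\theta Z_L}$ on $\mathcal H_L$. You additionally spell out the symplectic-basis extension and the sign fixing.

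For (1)$\Rightarrow$(2) your route is genuinely different and more concrete than the paper's.
\begin{itemize}
\item The paper argues that the induced generator on $\mathcal H_L$ ``must'' be a logical Pauli. It then takes the stabilizer group to be all Paulis commuting with $P$, asserting that these act trivially on $\mathcal H_L$. That identification is wrong as written: $P$ itself commutes with $P$ but acts as $Z_L$. The paper also never asks whether $\ket{\mathrm{aux}}$ is a stabilizer state.
\item You instead read off $C\ket{\phi_b}=\ket{b}\otimes\ket{\mathrm{aux}}$ at $\theta=0$. The $\theta$-derivative gives $CP\overline{C}\,(\ket{v}\otimes\ket{\mathrm{aux}})=Z\ket{v}\otimes\ket{\mathrm{aux}}$. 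Factoring the Pauli $CP\overline{C}=Q_1\otimes Q_2$ then yields $P=Z_L s$. Finally you build the stabilizer explicitly as the pullback $\overline{C}(\mathbb{1}\otimes\mathrm{Stab}(\ket{\mathrm{aux}}))C$.
\end{itemize}

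Your route buys an actual proof. It also shows that the gauge clause is necessary rather than a hedge. Take $m=2$, $C=\mathbb{1}$, $P=Z\otimes\mathbb{1}$, and $\ket{\phi_b}=\ket{b}\otimes\ket{T}$ with $\ket{T}=(\ket{0}+e^{i\pi/4}\ket{1})/\sqrt{2}$. Statement (1) holds, but the only Pauli fixing $\mathcal H_L$ is the identity, so $\mathcal H_L$ is not a stabilizer codespace.

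You should therefore drop the framing that one must ``show $\ket{\mathrm{aux}}$ can be taken to be a stabilizer state''; that is false in general. Present the subsystem reading as the real content of (1)$\Rightarrow$(2). A Clifford on the auxiliary register brings $\mathrm{Stab}(\ket{\mathrm{aux}})$ to single-qubit $Z$'s on $r$ auxiliary qubits. Then $\ket{\mathrm{aux}}\mapsto\ket{0}^{\otimes r}\otimes\ket{g}$, where $\ket{g}$ is a gauge state with trivial Pauli stabilizer. The same decoder handles the subsystem case of (2)$\Rightarrow$(1), with $\ket{\mathrm{aux}}=\ket{0}^{\otimes r}\otimes\ket{g}$.
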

The proof of this theorem has been presented in Appendix (\ref{App: PLCC of stabilizer-compatible metrology protocols}).  In a metrology protocol, the unknown parameter $\theta$ is embedded through the unitary evolution $\exp(-i\theta G)$ acting on an initial probe state $\ket{\psi_0}$. Theorem 1 implies that whenever deterministic Clifford phase kickback is possible, this parameter encoding can be reduced to an effective single-qubit process.

We illustrate stabilizer-compatible Clifford lensing using paradigmatic Ramsey metrology with an $n$-qubit GHZ probe evolving under the collective generator $S_z$. The dynamics is confined to the repetition-code subspace encoding a single logical qubit, on which the generator acts as $S_z \mapsto (n/2) Z_L$.
A Clifford isometry composed of CNOT gates deterministically disentangles this codespace, mapping the entire $\theta$-dependence onto one physical qubit.
This realizes Clifford phase kickback, coherently amplifying the phase by a factor $n$ and yielding Heisenberg-limited scaling ${\mathcal F}_Q\sim n^2$ (see Appendix (\ref{App:PLCC of Ramsey metrology}) for details). In this sense, standard interferometric protocols such as Ramsey metrology can be viewed as instances of Clifford lensing, where phase information is coherently refocused onto a single measurable degree of freedom.

Next, we present a distinct sensing protocol constructed from first principles within our framework,  based on the logical states of the surface code. The logical code states are defined as ~\citep{marques2022logical},
\begin{align}
    |0\rangle_L \equiv \frac{1}{\sqrt{2}}(|0000\rangle + |1111\rangle),~~  
    |1\rangle_L \equiv\frac{1}{\sqrt{2}}(|0101\rangle + |1010\rangle).\nonumber
    \end{align}
Consider an $n$-logical-qubit GHZ state,
    $|\mathrm{GHZ}\rangle_L = {(|0\rangle_L^{\otimes n} + |1\rangle_L^{\otimes n})}/{\sqrt{2}},$
which can be prepared using the Clifford circuit (see details in Appendix (\ref{App:PLCC of surface code based metrology protocol})). One choice of logical $Z$ operator of the code may be given as $Z_L \equiv Z_1Z_2$ or $Z_L \equiv Z_3Z_4$. Employing the identity $CNOT_{i, i+1}(Z_iZ_{i+1}){CNOT_{i, i+1}} =Z_{i+1}$, the logical $Z$ operators of the code may be transformed to  metrologically relevant single-qubit operators acting on a subset of physical qubits. Hence, a phase $\theta$ may be embedded by subjecting this subset of physical qubits of the code to the phase evolution generated by $Z_i$, yielding the state
    ${(|0\rangle_L^{\otimes n} + e^{i n \theta} |1\rangle_L^{\otimes n})}/{\sqrt{2}}.$ The eigenstates of SLD are $(|0\rangle_L^{\otimes n}\pm i|1\rangle_L^{\otimes n})/\sqrt{2}$. These states can be disentangled by applying the inverse Clifford circuit, which coherently kicks back the accumulated phase $n\theta$ onto the first logical qubit. This process can be described entirely within the {stabilizer tableau formalism}. In this picture, applying the inverse Clifford circuit updates the tableau so as to disentangle the state while coherently kicking back the accumulated phase $n\theta$ to the first qubit.  

We emphasize that theorem 1 goes further than merely establishing error correctable quantum metrology schemes \citep{kessler2014quantum,arrad2014increasing,zhou2018achieving}. Our goal was the \textit{a priori} reduction of the maximum Pauli weight of the optimal measurement, and we relate every $[n,1]$ error-correcting code to an interferometer design that kicks the phase back to one qubit. Hence, generalizing the spirit of Ramsey metrology, we derive the family of interferometers for which one-body measurements suffice to saturate Fisher scaling. Based on this intuition, we also develop a general theory to optimize those protocols that lie outside this restricted set, which we present below.

\subsection{Stabilizer-incompatible metrology protocols}
 For generic choices of the generator $G$ and initial state $\ket{\psi_0}$, the extremal eigenstates $\ket{\lambda_\pm}$ of the SLD are themselves {non-stabilizer states}. Consequently, the optimal measurement necessarily requires non-Clifford resources. In this stabilizer-incompatible regime, Clifford lensing  can at best provide phase localization of the SLD to fewer qubits, by partially localizing phase information onto fewer qubits, thereby mapping the SLD to lower Pauli weight operators. We formalize this in theorem (\ref{thm:plcc_characterization_1}).

\begin{theorem}[Characterization of Clifford lensing]
\label{thm:plcc_characterization_1}
An $n$-qubit metrology protocol $\{\ket{\psi_0},G\}$ admits Clifford lensing if and only if there exist
a $\theta$-independent Clifford isometry $C:\mathcal{H}_{2^n}\to\mathcal{H}_{2^k}\otimes\mathcal{H}_{2^{n-k}}, k<n,$
a $k$-qubit probe state $\ket{\psi^{(k)}_0}$, an $(n-k)$-qubit auxiliary state
$\ket{\phi^{(n-k)}_{\mathrm{aux}}}$, and Hermitian operators
$G^{(k)}$ and $G^{(n-k)}_{\mathrm{aux}}$ such that
 $C\ket{\psi_0}
=
\ket{\psi^{(k)}_0}\otimes\ket{\phi^{(n-k)}_{\mathrm{aux}}},
C G \overline{C}
=
G^{(k)} \otimes G^{(n-k)}_{\mathrm{aux}},$
with the auxiliary state $\ket{\phi^{(n-k)}_{\mathrm{aux}}}$ an eigenstate of
$G^{(n-k)}_{\mathrm{aux}}$.
\end{theorem}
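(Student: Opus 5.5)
The plan is to prove the two implications separately, taking as the definition of Clifford lensing the factorization $C\rho_\theta\overline{C}=\rho^{(k)}_\theta\otimes\rho^{(n-k)}_{\rm aux}$ with $\partial_\theta\rho^{(n-k)}_{\rm aux}=0$, holding in a neighborhood of $\theta=0$. The SLD factorization is expected to follow from the state factorization.

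\emph{Sufficiency.} Assume the stated decompositions hold, and let $G^{(n-k)}_{\rm aux}\ket{\phi_{\rm aux}}=g\ket{\phi_{\rm aux}}$. First, unitarity of $C$ on its range lets us write
\[
Ce^{-i\theta G}\overline{C}=e^{-i\theta\, G^{(k)}\otimes G^{(n-k)}_{\rm aux}}
\]
on the relevant subspace. Expanding the exponential in a power series and using $(G^{(k)}\otimes G_{\rm aux})^m(\ket{\psi^{(k)}_0}\otimes\ket{\phi_{\rm aux}})=(g G^{(k)})^m\ket{\psi^{(k)}_0}\otimes\ket{\phi_{\rm aux}}$, we obtain
\[
C\ket{\psi_\theta}=e^{-i\theta gG^{(k)}}\ket{\psi^{(k)}_0}\otimes\ket{\phi_{\rm aux}}.
\]
This is exactly the required product form, with all $\theta$-dependence on the $k$ qubits and effective generator $gG^{(k)}$. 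For the SLD, apply the pure-state formula to $C\ket{\psi_\theta}$. Since $\partial_\theta$ acts only on the first factor, $C\mathcal L\overline{C}=\mathcal L^{(k)}\otimes\ket{\phi_{\rm aux}}\bra{\phi_{\rm aux}}$ on the support. Because the SLD is only defined on the support of $\rho_\theta$, this establishes the claimed tensor structure. Finally, $\mathcal F_Q$ is invariant under isometries and additive over the product, so the QFI is preserved, equal to $4g^2\mathrm{Var}_{\psi^{(k)}_0}(G^{(k)})$.

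\emph{Necessity.} Assume $C\ket{\psi_\theta}=\ket{\psi^{(k)}_\theta}\otimes\ket{\phi_{\rm aux}}$ with $\ket{\phi_{\rm aux}}$ independent of $\theta$. Setting $\theta=0$ gives the product form of $C\ket{\psi_0}$. Differentiating at $\theta=0$ gives
\[
CG\ket{\psi_0}=\bigl(i\partial_\theta\ket{\psi^{(k)}_\theta}\bigr)\otimes\ket{\phi_{\rm aux}}.
\]
Define $G^{(k)}$ as a Hermitian operator reproducing this action on $\ket{\psi^{(k)}_0}$, for instance $G^{(k)}=\ket{\eta}\bra{\psi^{(k)}_0}+\ket{\psi^{(k)}_0}\bra{\eta}-\langle\psi^{(k)}_0|\eta\rangle\ket{\psi^{(k)}_0}\bra{\psi^{(k)}_0}$, with $\ket{\eta}=i\partial_\theta\ket{\psi^{(k)}_\theta}|_0$. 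Take $G_{\rm aux}=\ket{\phi_{\rm aux}}\bra{\phi_{\rm aux}}$, of which $\ket{\phi_{\rm aux}}$ is an eigenstate.

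The main obstacle is that the theorem demands the operator identity $CG\overline{C}=G^{(k)}\otimes G_{\rm aux}$. The argument above only guarantees that the two sides agree on the orbit $\{C\ket{\psi_\theta}\}$, and more precisely on the SLD subspace $\mathrm{span}\{\ket{\psi_0},\ket{\chi}\}$. I would handle this by interpreting the identity, as the statement's context allows, as holding on the dynamically relevant subspace. That subspace is the cyclic span of $\ket{\psi_0}$ under $G$, which is all that enters $\rho_\theta$ near $\theta=0$. The plan is to show that requiring factorization for all $\theta$ in a neighborhood, via analyticity and all higher derivatives, forces $CG^m\ket{\psi_0}\in\mathcal H_{2^k}\otimes\ket{\phi_{\rm aux}}$ for every $m$. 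Then $CG\overline{C}$ restricted to this invariant subspace has the form $\tilde G^{(k)}\otimes\ket{\phi_{\rm aux}}\bra{\phi_{\rm aux}}$. Extending $\tilde G^{(k)}$ Hermitianly off the subspace completes the construction. The Clifford character of $C$ plays no role beyond being the allowed class; it is needed for the Pauli-weight reduction, not for the equivalence itself.
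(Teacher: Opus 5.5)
Your sufficiency direction is correct and more explicit than the paper's. The paper only argues that the auxiliary register sits in a fixed eigenstate and is ``metrologically inert,'' and that Cliffords map local Paulis to local Paulis. You instead derive $C\ket{\psi_\theta}=e^{-i\theta gG^{(k)}}\ket{\psi^{(k)}_0}\otimes\ket{\phi_{\rm aux}}$ and $C\mathcal L\overline C=\mathcal L^{(k)}\otimes\ket{\phi_{\rm aux}}\bra{\phi_{\rm aux}}$ directly from the factorization definition. Since $C$ maps $\mathcal H_{2^n}$ onto a space of equal dimension, it is unitary, so your ``on the relevant subspace'' caveat is unnecessary.

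In the necessity direction, the step you flag is a genuine gap. You only obtain $CG\overline C=\tilde G^{(k)}\otimes\ket{\phi_{\rm aux}}\bra{\phi_{\rm aux}}$ on $C\,\mathrm{span}\{G^m\ket{\psi_0}\}_{m\ge 0}$, not the full operator identity the theorem asserts. No argument can close it, because the literal ``only if'' is false.

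\emph{Counterexample.} Take $n=2$ and $G=Z\otimes\mathbb{1}+\mathbb{1}\otimes H$, with $H=h_0\mathbb{1}+h_xX+h_yY+h_zZ$ and all $h$'s nonzero. Let $\ket{\psi_0}=\ket{+}\otimes\ket{e}$ with $H\ket{e}=\lambda\ket{e}$.
\begin{itemize}
\item With $C=\mathbb{1}$ and $k=1$, one gets $\rho_\theta=e^{-i\theta Z}\ket{+}\bra{+}e^{i\theta Z}\otimes\ket{e}\bra{e}$ and $\mathcal L=2i[e^{-i\theta Z}\ket{+}\bra{+}e^{i\theta Z},Z]\otimes\ket{e}\bra{e}$. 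So the protocol admits Clifford lensing, with $\mathcal F_Q=4$.
\item For every two-qubit Clifford $C$, however, $CG\overline C$ has nonzero coefficients on exactly five distinct Pauli strings: the images of $Z\otimes\mathbb{1}$, $\mathbb{1}\otimes\mathbb{1}$, $\mathbb{1}\otimes X$, $\mathbb{1}\otimes Y$, $\mathbb{1}\otimes Z$.
\item A product $A\otimes B$ has Pauli-coefficient matrix $\alpha\beta^{T}$. Its support is a rectangle $\mathrm{supp}\,\alpha\times\mathrm{supp}\,\beta$ in the $4\times4$ grid, and no such rectangle has five cells.
\item The case $k=0$ is excluded, because it would make $\ket{\psi_0}$ an eigenstate of $G$.
\end{itemize}

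The paper's proof passes this point only by asserting that ``preservation of the encoding statistics further requires'' the single-product form. Its own commuting-projector example already uses a sum $\sum_i\tilde G^{(i)}\otimes G^{(i)}_{\rm aux}$ instead.

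So do not say the context ``allows'' your reading. State explicitly that you are proving a corrected theorem, in which the generator condition is imposed only on the $G$-invariant subspace $C\,\mathrm{span}\{G^m\ket{\psi_0}\}$. Both of your directions go through for that version, since sufficiency only uses how $CG\overline C$ acts on $C\ket{\psi_0}$ and its iterates. Your analyticity argument also needs only $C\ket{\psi_\theta}\in\mathcal H_{2^k}\otimes\ket{\phi_{\rm aux}}$ for $\theta$ near $0$, so a $\theta$-dependent global phase of $\ket{\psi^{(k)}_\theta}$ causes no trouble.
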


The proof of this theorem is shown in Appendix (\ref{App: hm:plcc_characterization_1}).
As an illustration of Theorem~(\ref{thm:plcc_characterization_1}), we now present metrology protocols that admit phase kickback onto  $k>1$ qubits rather than  a single qubit. We begin with a protocol defined by the fiducial state $\ket{\psi_0}
=
\prod_{j=2}^n \mathrm{CNOT}_{1\to j}
\prod_{i=1}^k (T_i H_i)
\ket{0}^{\otimes n},$  where the first qubit is entangled with the remaining $n-1$ qubits by CNOT gates, and the only non-Clifford resources are applied to the first $k$ qubits. This construction distributes the non-Clifford metrological resource across a 
$k$-qubit subsystem while the remaining qubits are prepared using Clifford operations alone. After phase embedding under the collective generator $S_z$, the Clifford part of the input circuit can be undone. This reverses the initial entangling map, causing the last $n-k$ qubits to deterministically disentangle and factor out of the state, while all $\theta$-dependence is confined to the first $k$ qubits. In this way, the phase information is localized onto a $k$-qubit register and the QFI is ${\cal O}(n^2)$.

Another example of Theorem~\ref{thm:plcc_characterization_1} is obtained by specifying a code space defined by a set of mutually commuting projectors $\{P_\alpha\}_{\alpha=1}^{n-k}$ acting on an $n$-qubit Hilbert space \citep{Bravyi10, movassagh2024constructing}. The joint eigenspace of these projectors supports the non-Clifford resource and has effective dimension $2^k$, so that its magic can be concentrated onto $k$ qubits. A Clifford isometry $C$ implements this concentration, yielding the decomposition $C\ket{\psi_0}=\ket{\tilde\psi_0}\otimes\ket{\phi_{\mathrm{aux}}}$ and $CG\overline C=\sum_i\tilde G^{(i)}\otimes G^{(i)}_{\mathrm{aux}}$, with $\ket{\phi_{\mathrm{aux}}}$ an eigenstate of $G^{(i)}_{\mathrm{aux}}$. The phase is therefore encoded entirely in the $k$-qubit logical subsystem that carries the magic. The QFI is governed by $\mathrm{Var}(\tilde G)$, which may be chosen to yield ${\cal O}(n^2)$ scaling (see Appendix (\ref{App: PLCC of commuting projection based quantum metrology protocols}) for details).

The last obstacle to many-body quantum metrology is to make $k$-body measurements. We employ shadow tomography \citep{huang2020predicting}, which samples the density matrix and reconstructs operator expectation values. For randomized measurements to exhibit quantum sensitivity, the variance of the SLD must satisfy $\mathrm{var}({\cal L})\lesssim\mathcal{F}_Q={\cal O}(n^2)$. This requires Pauli words in the SLD to be only logarithmically local, so that  $k={\cal O}(\log n)$ for random Pauli measurements, a condition achievable via Clifford lensing of ${\cal L}$ over a Clifford gateset.  Additionally, shadow tomography requires gates and measurements that are inaccessible by most experimental platforms. From a metrological perspective, we show that such restrictive gates and measurements can be circumvented, retaining exactly the information needed to achieve Fisher-saturating precision.

\section{Metrologically sufficient channels}
\label{Shadow with Ensemble Measurements}
 A channel $\mathcal{E}$ is \emph{metrologically sufficient} if it preserves the subspace $\mathrm{span}\{\rho_\theta,\partial_\theta\rho_\theta\}$, thereby {preserving the QFI}. For unitary phase embedding, this holds if and only if  $\mathcal{E}([G,\rho_\theta])=[G_{\rm eff},\mathcal{E}(\rho_\theta)]$ for some effective generator $G_{\rm eff}$. In contrast to this, standard shadow tomography aims at reconstructing generic observables and therefore  relies on single-shot, qubit-resolved measurements combined with random local unitaries \citep{huang2020predicting}. Such requirements are incompatible with many experimental platforms lacking individual addressability or arbitrary local Clifford control. Even when Haar-random unitaries are replaced by Clifford ensembles—sufficient for reproducing low-order moments—the assumption of local measurements remains prohibitive. Motivated by the observation that quantum metrology requires only the preservation of the SLD subspace rather than full tomographic completeness, we introduce a framework of partial shadow channels with restricted gates and measurements. Although not tomographically complete, these channels are metrologically sufficient whenever the SLD lies within the linear span of observables accessible under collective control. This provides a scalable approach to quantum metrology compatible with ensemble measurements and realistic experimental constraints. Below, we present a shadow-tomographic protocol that involves collective control and ensemble measurements.

\subsection{Partial shadow tomography with collective measurements}
Consider an $n$-qubit probe state $\rho_\theta$ subject to collective unitary control and ensemble-averaged measurement of collective observables. In each experimental run, a unitary $U$ is sampled from an implementable ensemble $\mathcal{U}$ and applied as $\rho_\theta \mapsto U \rho_\theta \overline{U}$. A collective observable $M$ is then measured, yielding the ensemble-averaged outcome
$\langle M \rangle_U = \mathrm{Tr}(M\,U\rho_\theta \overline{U})$.

Each data pair $(U,\langle M \rangle_U)$ is mapped to a classical shadow via a linear reconstruction map $\mathcal{R}_U$, producing an estimator $\hat{\rho}_U=\mathcal{R}_U(\langle M \rangle_U)$. The reconstruction is chosen such that, for all observables $O$ in a designated operator subspace $\mathcal{S}$, the estimator is unbiased:
$\mathbb{E}_{U}[\mathrm{Tr}(O\,\hat{\rho}_U)] = \mathrm{Tr}(O\,\rho_\theta)$.
Given $N$ independent realizations $\{U_i,\langle M \rangle_{U_i}\}_{i=1}^N$, expectation values are estimated as
$\widehat{\langle O\rangle} = N^{-1}\sum_{i=1}^N \mathrm{Tr}(O\,\hat{\rho}_{U_i})$. The protocol is metrologically sufficient if the SLD {$\mathcal{L}_\theta \in \mathrm{span}(\mathcal{S})$}, ensuring that all observables required to saturate the quantum Cramér--Rao bound can be reconstructed from the shadow data.

\subsection{Partial shadow tomography with collective unitaries}
As an example of restricted operations, we consider a protocol based on random collective Clifford unitaries $U^{\otimes n}$ and ensemble measurements of a collective observable. While the resulting measurement channel is not tomographically complete, it can be metrologically complete, enabling faithful parameter estimation whenever the {SLD} lies within the linear span of observables accessible under collective control (see details in Appendix  (\ref{app:metrological_sufficiency})). 

Operationally, Clifford lensing consists of applying a {Clifford isometry ${C}$} prior to measurement, followed by standard random Pauli measurements and classical shadows post-processing. The improvement arises entirely from the reduced shadow norm of the Clifford lensed observable. Clifford lensing directly translates into a reduction in sample complexity for suitably chosen observables. We next demonstrate how this Clifford lensing advantage can be realized experimentally via collective Clifford-inverted shadow measurements in an NMR-based metrological protocol.

\section{NMR Based shadow metrology}
\label{Sec: NMR Based shadow metrology}
Our theoretical results construct a family of Clifford lenses that coherently focus phase information and leverage controlled randomness within a restricted experimental setting to enable its efficient extraction. To experimentally validate this framework, we present a demonstration of a metrologically sufficient quantum channel, realized through controlled phase kickback and collective Pauli twirling–based shadow tomography.
 The protocol is implemented on a {star-topology register} of hexamethyl phosphoric acid triamide (HMPA), consisting of a central $^{31}\mathrm{P}$ spin coupled to 18 equivalent $^{1}\mathrm{H}$ spins enabling effective system sizes ranging from $n=1$ to $15$ qubits (shown in Fig.  (\ref{fig:experimentalschematic})). Our implementation exploits coherent phase accumulation on the central spin to embed the phase $\theta$, enabling efficient extraction of unitary features of the underlying dynamics. We implement a randomized version of Ramsey interferometry, wherein the goal of the experiment is to estimate the embedded phase in the state $|\psi(\theta)\rangle = (|0\rangle^{\otimes n}+e^{i n \theta}|1\rangle^{\otimes n})/{\sqrt{2}}$. The phase $\theta$ is estimated by creating a classical shadow implemented only with collective interactions from the set $Cl(2)^{\otimes n}$, which is shown to be metrologically sufficient in the Appendix (\ref{App: Theory of experimental shadow metrology using NMR}). The Clifford gates are implemented via  {gradient ascent pulse engineering (GRAPE)}, while {Carr-Purcell-Meiboom-Gill (CPMG)} pulses suppress practical decoherence during the signal acquisition. Data is recorded every $\Delta \theta = 0.2^{\circ}$ steps from $9.4^{\circ}-10.6^{\circ}$ and the experimental variance of the shadow-measured phase is retrieved using a simple error-propagation formula with no additional fitting parameters. The resultant plots of the experimental variance in the local metrology setting in the range indicated is shown in a log-log plot against the system size ($n$), with additional guiding lines for the Heisenberg limit (HL) and standard quantum limit (SQL) in Fig. \ref{fig:Shadow_experiment_Scaling}. As we see, {almost all the estimated shadow means  follow the Heisenberg line}, and the variance is suppressed by the large sample size of $10^{15}$, which is controlled by the HMPA concentration. On some of the received signal, there is leakage from gradient echoes.

{At large $n$ values, several effects which hinder precise phase readouts become significant, chiefly the Gaussian scaling with $n$ of signal to noise ratio, and the amplification of phase errors, proportional to $n$, accumulated throughout the pulse sequence. In order to mitigate these a number of measures were taken: The sample was maintained at an ambient temperature of 220 K which increased the signal to noise ratio significantly, the Clifford gates were optimized  using GRAPE \citep{khaneja2005optimal} for robustness with respect to control pulse inhomogeneity and the CPMG \citep{carr1954effects,meiboom1958modified} sequence was applied during all evolution intervals to reduce dephasing. Additionally the gradient filter was optimized to maintain strength of the desired signal peak while  preventing contamination from undesired coherence orders.}

\begin{figure}[hbt!]
    \centering
    \includegraphics[clip=,trim=0.3cm 0cm 0.3cm 0.3cm,width=\linewidth]{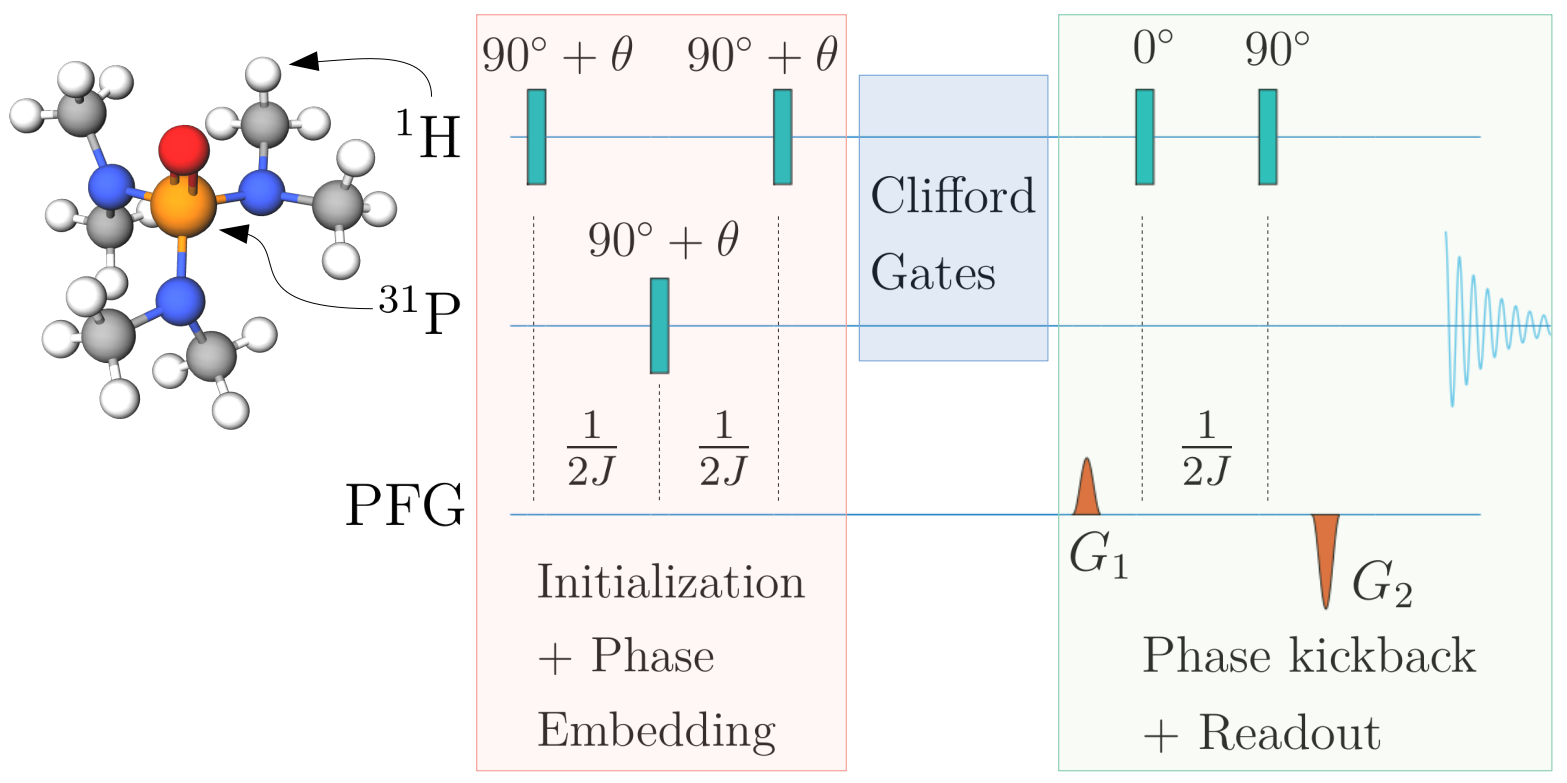}
    \caption{The HMPA molecular structure (left) and the NMR pulse sequence (right) for the metrology experiment. All pulses shown by rectangles are 90 degree rotations with phases as shown.  The Clifford gates were designed with GRAPE technique. The CPMG pulses during evolution delays are not shown.}
    \label{fig:experimentalschematic}
\end{figure}

\begin{figure}[hbt!]
        \centering
        \includegraphics[width=1\linewidth]{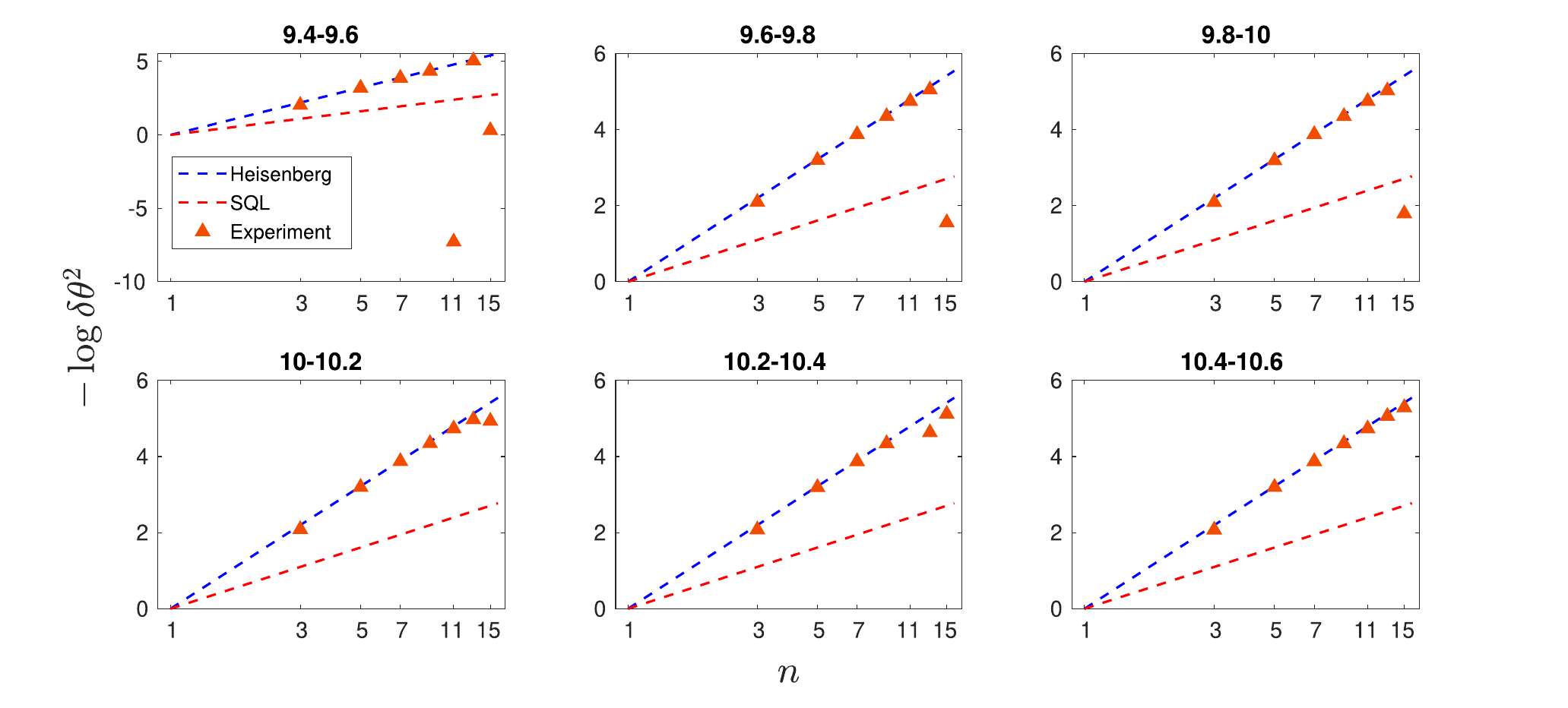}
        \caption{The experimental angular sensitivity vs coherence order $n$ at different ranges of embedded angles varying from $9.4^\circ$ to $10.6^\circ$. The blue dashed line represents the Heisenberg limit, while the red dashed line represents the shot-noise (standard quantum limit). The outliers at high coherence orders are due to a combination of gradient echoes, vulnerability of the weak NMR signal, and baseline errors in the NMR spectrum.
}
        \label{fig:Shadow_experiment_Scaling}
    \end{figure}

\section{Conclusion}
\label{Outlook}

In this work, we close the gap between theoretical optimality and experimental feasibility in quantum metrology by demonstrating that measurements achieving optimal sensitivity can be made experimentally viable through {\it Clifford lensing}. From an interferometric perspective, Clifford lensing provides a systematic way to design transformations that concentrate phase information into accessible measurement modes, enabling efficient implementation of many-body metrology protocols using classical-shadow techniques.
We introduced a class of metrologically sufficient protocols in which optimal sensitivity is retained despite restricted measurement control, and experimentally demonstrated this approach on a liquid-state NMR platform. Our results identify the Pauli weight of the measurement operator as the relevant notion of complexity in quantum metrology, and we propose this quantity as a quantitative measure of measurement complexity.
We further showed that quantum error-correcting codes can be interpreted as interferometers that enable phase kickback, establishing a general correspondence between encoding structures and metrological protocols. This perspective suggests that error-correcting codes can be systematically repurposed as sensing architectures, offering a new viewpoint on the interplay between quantum redundancy, many-body entanglement, and measurement.

An important direction for future work is the design of interferometric protocols based on alternative codes that localize phase information onto a reduced number of qubits. Another promising avenue is the development of task-specific shadow channels tailored to experimental constraints. Beyond metrology, such approaches may find applications in Hamiltonian learning, error mitigation, and sensing in constrained architectures, where the structure of the task—rather than full state reconstruction—determines the optimal measurement strategy.

\section*{Acknowledgments}

SV acknowledges motivating discussions at conferences ICTS/QuTr2025/01 and ICTS/pdcs2023/6. SV also acknowledges discussions with Marcello Dalmonte, David Elkouss, Gerald Fux, Pavithran Iyer and Paolo Zanardi.  CA and TS thank the National Mission on Interdisciplinary Cyber-Physical Systems for funding from the DST, Government of India, through the I-HUB Quantum Technology Foundation, IISER-Pune. AKS acknowledges support from the European Union’s Horizon 2020 Research and Innovation Programme under the Marie Sklodowska-Curie Grant Agreement No. 847517; MCIN/AEI (PGC2018–0910.13039/501100011033, CEX2019-000910-S/10.13039/501100011033, Plan National STAMEENA PID2022-139099NB, project funded by MCIN and by the “European Union NextGenerationEU/PRTR” (PRTR-C17.I1), FPI); Ministry for Digital Transformation and of Civil Service of the Spanish Government through the QUANTUM ENIA project call - Quantum Spain project, and by the European Union through the Recovery, Transformation and Resilience Plan – NextGenerationEU within the framework of the Digital Spain 2026 Agenda; CEX2024-001490-S [MICIU/AEI/10.13039/501100011033]; Fundació Cellex; Fundació Mir-Puig; Generalitat de Catalunya (European Social Fund FEDER and CERCA program); Barcelona Supercomputing Center MareNostrum (FI-2023-3-0024); European Union HORIZON-CL4-2022-QUANTUM-02-SGA – PASQuanS2.1, 101113690; EU Horizon 2020 FET-OPEN OPTOlogic, Grant No 899794; QU-ATTO, 101168628; EU Horizon Europe research and innovation program under grant agreement No. 101080086 NeQST.

\appendix

\section{Complexity of measurement of symmetric logarithmic derivative}
\label{App: Complexity of measurement of SLD}
In this appendix, we derive the SLD associated with a general unitary quantum metrology protocol in the local estimation regime and characterize the measurements that saturate the QFI. We also present the formulation in a form convenient for optimization over restricted measurement families,
such as those generated by low-complexity Clifford circuits.

We consider a metrological protocol defined by a pure probe state
$\ket{\psi_0}$ and a generator $G$. The unknown parameter
$\theta$ is embedded through unitary evolution
$U_\theta = \exp({-i\theta G}),
\ket{\psi_\theta} = U_\theta \ket{\psi_0}$,
giving the density operator
$\rho_\theta = \ket{\psi_\theta}\!\bra{\psi_\theta}.$ We work in the local estimation regime, where the parameter lies in a
small neighborhood of a reference value, taken without loss of generality
to be $\theta=0$. The task is therefore to distinguish infinitesimally
close states $\rho_\theta$ and $\rho_{\theta+d\theta}$, and all relevant
quantities are evaluated to leading order in $\theta$. The SLD operator $\mathcal L_\theta$ is implicitly defined through \citep{sidhu2020geometric} $\partial_\theta \rho_\theta
=
\left(
\rho_\theta \mathcal L_\theta
+
\mathcal L_\theta \rho_\theta
\right)/2.$ For unitary parameterizations of pure states,
$\partial_\theta \ket{\psi_\theta}
=
-i G \ket{\psi_\theta},$
and hence $\partial_\theta \rho_\theta
=
-i [G,\rho_\theta].$ The SLD therefore takes the explicit form
$\mathcal L_\theta
=
2\,\partial_\theta \rho_\theta
=
2i [\rho_\theta, G].$ Equivalently,
$\mathcal L_\theta
=
2(
\ket{\partial_\theta \psi_\theta}\bra{\psi_\theta}
+
\ket{\psi_\theta}\bra{\partial_\theta \psi_\theta}).$ Introduce a decomposition of the identity, $\mathbb{1}
=
\ket{\psi_\theta}\!\bra{\psi_\theta}
+
\sum_i \ket{\phi_i}\!\bra{\phi_i},$
where $\{\ket{\phi_i}\}$ spans the subspace orthogonal to
$\ket{\psi_\theta}$. Substituting this resolution gives
$\mathcal L_\theta
=
2i \sum_i
(
c_i^\ast(\theta)\ket{\psi_\theta}\!\bra{\phi_i}
-
c_i(\theta)\ket{\phi_i}\!\bra{\psi_\theta}),$ with coefficients
$c_i(\theta)
=
\bra{\phi_i} G \ket{\psi_\theta}.$ Defining the normalized orthogonal state
$\ket{\chi_\theta}
=
{
(G - \langle G \rangle_{\psi_\theta}) \ket{\psi_\theta}
}/{
\sqrt{\mathrm{Var}_{\psi_\theta}(G)}
},$
where $\langle G \rangle_{\psi_\theta}
=
\bra{\psi_\theta} G \ket{\psi_\theta},
\mathrm{Var}_{\psi_\theta}(G)
=
\bra{\psi_\theta} G^2 \ket{\psi_\theta}
-
\langle G \rangle_{\psi_\theta}^2,$
the SLD reduces to the rank-two form
$\mathcal L_\theta
=
2i \sqrt{\mathrm{Var}_{\psi_\theta}(G)}(
\ket{\psi_\theta}\!\bra{\chi_\theta}
-
\ket{\chi_\theta}\!\bra{\psi_\theta}).$
By construction,
$\braket{\chi_\theta|\psi_\theta}=0.$ Thus $\mathcal L_\theta$ acts nontrivially only on the
two-dimensional subspace
$\mathrm{span}\{\ket{\psi_\theta},\ket{\chi_\theta}\}.$ Within this subspace the SLD eigenstates are
$\ket{\lambda_\pm}
=
{
\ket{\psi_\theta} \pm i \ket{\chi_\theta}
}/{\sqrt{2}},$
with eigenvalues
$\lambda_\pm
=
\pm 2 \sqrt{\mathrm{Var}_{\psi_\theta}(G)}.$
All remaining eigenvalues vanish, so the SLD is traceless and of rank two.
For a POVM $\{\pi_i\}$, the QFI can be expressed as $\mathcal F_Q
=
\sum_i
{
\left(
\mathrm{Re}\,
\mathrm{Tr}[\rho_\theta \pi_i \mathcal L_\theta]
\right)^2
}/{
\mathrm{Tr}[\pi_i \rho_\theta]
}.$ For pure states this reduces to
$\mathcal F_Q
=
\sum_i
{
\left(
\mathrm{Re}\,
\bra{\psi_\theta} \pi_i \mathcal L_\theta
\ket{\psi_\theta}
\right)^2
}/{
\bra{\psi_\theta} \pi_i \ket{\psi_\theta}
}.$ The QFI is maximized by projective measurements onto the eigenstates of
$\mathcal L_\theta$. Defining the identity on the support of the SLD,
$\mathbb{1}_2
=
\ket{\psi_\theta}\!\bra{\psi_\theta}
+
\ket{\chi_\theta}\!\bra{\chi_\theta},$
the optimal measurement operators may be written as
$\pi_\pm
=
\left(
\mathbb{1}_2
\pm
{\mathcal L_\theta}/
({2\sqrt{\mathrm{Var}_{\psi_\theta}(G)}})
\right)/2.$

For unbiased estimators,
$\mathrm{Tr}(\rho_\theta \mathcal L_\theta)=0$, and the QFI simplifies to
$\mathcal F_Q
=
\bra{\psi_\theta} \mathcal L_\theta^2 \ket{\psi_\theta}
=
4\,\mathrm{Var}_{\psi_\theta}(G),$ 
recovering the standard result for unitary parameter estimation with pure
probe states.

\subsection{Need for Clifford lensing}
\label{Need for Clifford optimization}
Physically, the SLD ${\cal L}$ becomes highly non-local whenever the
QFI exhibits super-extensive scaling,
$\mathcal{F}_Q = {\cal O}(n^2)$. Such scaling signals that information about
the parameter is stored not in independent local degrees of freedom but
in collective many-body correlations distributed across the entire
system. Extracting this information, therefore, requires observables that
act coherently on a macroscopic fraction of the system. In Pauli-operator language, accessing these global correlations demands
operators containing high-weight Pauli strings acting on many qubits
simultaneously. Since the SLD corresponds to the optimal observable for
parameter estimation, it inherits this structure and typically involves
extensive Pauli-weight components. Consequently, measurements saturating
the QFI generally require collective many-body measurements rather than
local or few-body observables.

 Consider a scenario in which we employ  randomized gates to generate a classical shadow $\hat{\rho}$ of a quantum state $\rho$ for quantum phase estimation. {If we na\"{i}vely
 apply shadow tomography -- that is, use randomized measurements -- to estimate $\langle {\cal L} \rangle_{\hat{\rho}}$ and $\langle {\cal L}^2 \rangle_{\hat{\rho}}$, the statistical upper bound on the variance of ${\cal L}$ depends on the choice of random gates. For Clifford gates, the upper bound is given by $3~{\rm tr}({\cal L}^2)$.} 
 For an ensemble of $n$-qubit Pauli gates, the upper bound scales as $4^k ||{\cal L}||_{\infty}^2$, where $k$ denotes the locality of the SLD ${\cal L}$ \citep{huang2020predicting}.   On the other hand, quantum sensitivity is fundamentally upper bounded by QFI, satisfying ${\rm var}({\cal L}) \leq \mathcal{F}_Q$, which itself is upper bounded by $n^2$. For quantum sensitivity to be observed through randomized measurements, the statistical variance should be less than  variance of the SLD in the state. So, the following conditions must be met: (a) for the Clifford gates: $n^2 \gg 3~{\rm tr}({\cal L}^2)$.  (b) For $n$-qubit Pauli gates: $n^2 \gg 4^k ||{\cal L}||_{\infty}^2$.   From these conditions, we derive constraints on the locality $k$ of ${\cal L}$. For Clifford gates and Pauli gates, the locality $k$ must satisfy the conditions $k \ll \log\Big({n^2}/{3}\Big)$ and $k \ll \log n^2/3$ respectively.  For this reason, to achieve quantum sensitivity through randomized measurements, the locality of the SLD ${\cal L}$ should scale only logarithmically with $n$. This necessitates an optimization of ${\cal L}$ over some gate set, say $\{C\}$ so that its Pauli weight may be reduced and the reduced Pauli weight operator ${\cal L}' = C{\cal L}\overline{C}$ may be used for phase estimation in shadow-tomography-based quantum metrology. A further requirement of classical simulability of the gateset $\{V\}$ restricts it to be Clifford(2) gateset.  

Let $X$ denote a single-shot outcome of the shadow estimator, and let 
$\overline{X}_M$ be the empirical mean over $M$ independent measurement shots. 
Since $X \in [-B_{\rm Pauli}, B_{\rm Pauli}]$, Hoeffding’s inequality 
implies
\begin{equation}
\Pr\!\left[\,\big|\overline{X}_M - \mathbb{E}[X]\big| \geq \varepsilon\,\right] 
    \;\leq\; 2\exp\!\Bigg(-\frac{2M\varepsilon^2}{(2B_{\rm Pauli})^2}\Bigg).
\end{equation}
Here $B_{\rm Pauli}$ denotes the single-shot bound of the estimator, which for a Pauli operator of weight $k$ scales as
$B_{\rm Pauli} \sim 4^{k}\,\|\mathcal{L}\|_\infty$.
Therefore, to estimate $\mathbb{E}[X]$ within accuracy $\varepsilon$ and failure probability at most $\delta$, it suffices that~\citep{huang2020predicting}
\begin{equation}
\label{eq:hoeff_pauli}
M \;\geq\; 
\frac{2\cdot 4^{2k}\,\|\mathcal{L}\|_\infty^2}{\varepsilon^2}
\ln\!\frac{2}{\delta}.
\end{equation}

To resolve fluctuations at the intrinsic scale of the SLD, we set
$\varepsilon \;=\; \eta\,\sqrt{\mathrm{var}(\mathcal{L})},
 0 < \eta \ll 1,$
which yields the copy complexity
\begin{equation}\label{eq:N_pauli}
N_{\rm Pauli} \;\gtrsim\; 
\frac{2\cdot 4^{k}\,\|\mathcal{L}\|_\infty^2}{\eta^2\,\mathrm{var}(\mathcal{L})}
    \,\ln\!\frac{2}{\delta}.
\end{equation}

Let $\mathcal{L}' = C\,\mathcal{L}\,{\overline C}$ for a Clifford unitary $C$. 
Such conjugation preserves both the operator norm and the intrinsic variance,
$\|\mathcal{L}'\|_\infty = \|\mathcal{L}\|_\infty, 
 \mathrm{Var}(\mathcal{L}') = \mathrm{Var}(\mathcal{L}),$
while potentially reducing the effective Pauli weight from $k_{\rm pre}$ to 
$k_{\rm post}$. The corresponding Pauli shadow sample complexities scale as
\begin{align}
M_{\rm Pauli}^{\rm pre} &\;\gtrsim\; 
\frac{2\cdot 4^{2k_{\rm pre}}\,\|\mathcal{L}\|_\infty^2}
{\eta^2\,\mathrm{Var}(\mathcal{L})}
    \,\ln\!\tfrac{2}{\delta}, 
    \label{eq:N_pre}\\[4pt]
M_{\rm Pauli}^{\rm post} &\;\gtrsim\; 
\frac{2\cdot 4^{2k_{\rm post}}\,\|\mathcal{L}\|_\infty^2}
{\eta^2\,\mathrm{Var}(\mathcal{L})}
    \,\ln\!\tfrac{2}{\delta}.
    \label{eq:N_post}
\end{align}

In the worst-case scenario where 
$\mathrm{Var}(\mathcal{L}) \sim {\cal F}_Q \sim n^2$ 
(i.e., when the SLD saturates the quantum Fisher information),
Eqs.~\eqref{eq:N_pre}–\eqref{eq:N_post} reduce to
\begin{equation}
M_{\rm Pauli} \;\gtrsim\; 
\frac{2}{\eta^2\,n^2}\,
4^{2k}\,\|\mathcal{L}\|_\infty^2\,
\ln\!\tfrac{2}{\delta}.
\end{equation}
Thus, an unoptimised Pauli weight $k_{\rm pre}=\mathcal{O}(n)$ leads to an 
exponential copy complexity, whereas Clifford lensing reducing the 
locality to $k_{\rm post}=\mathcal{O}(\log n)$ ensures that the required number 
of measurement shots scales only polynomially with system size.

\section{Clifford lensing of SLDs}

\subsection{Clifford lensing of stabilizer-compatible metrology protocols}
\label{App: PLCC of stabilizer-compatible metrology protocols}
In this section, we present the proof of theorem (1) in the main text.
\setcounter{theorem}{0}
\begin{theorem}[Deterministic Clifford phase kickback]\label{thm:phase_kickback_equivalence_appendix}
Let $\mathcal{H}_P = (\mathbb{C}^2)^{\otimes m}$ be a physical Hilbert space. The following statements are equivalent:

\begin{enumerate}
\item \textbf{Deterministic Clifford phase kickback.}  
There exist orthonormal states $\ket{\phi_0}, \ket{\phi_1} \in \mathcal{H}_P$, a physical Pauli operator $P$, and a Clifford unitary $U_C$ such that, for all $\alpha,\beta \in \mathbb{C}$,
\begin{equation}
U_C \, e^{-i\theta P} \left( \alpha \ket{\phi_0} + \beta \ket{\phi_1} \right)
=
\left( e^{-i\theta Z} (\alpha \ket{0} + \beta \ket{1}) \right)
\otimes \ket{\mathrm{aux}},
\label{eq:deterministic_kickback}
\end{equation}
where $\ket{\mathrm{aux}}$ is a fixed auxiliary state independent of $\alpha$, $\beta$, and $\theta$.

\item \textbf{Encoded Pauli logical qubit.}  
The span $\mathcal{H}_L = \mathrm{span}\{\ket{\phi_0}, \ket{\phi_1}\} \subset \mathcal{H}_P$ defines a two-dimensional codespace supporting a logical Pauli algebra $\{X_L, Z_L\}$, with $P$ acting as $Z_L$ on $\mathcal{H}_L$ (up to a stabilizer), and with all other Pauli operators acting trivially on $\mathcal{H}_L$. Moreover, $\mathcal{H}_L$ arises as the codespace of a stabilizer code encoding one logical qubit, possibly up to gauge degrees of freedom.
\end{enumerate}
\end{theorem}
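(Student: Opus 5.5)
We prove the two implications separately. The tool throughout is Heisenberg-picture conjugation by the Clifford unitary $U_C$: it maps Pauli operators to Pauli operators. A stabilizer state, and more generally a stabilizer codespace, is pulled back under $U_C$ to a stabilizer object of the same type.

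\emph{(1)$\Rightarrow$(2).} Take $\theta=0$ in Eq.~\eqref{eq:deterministic_kickback}. Linearity in $(\alpha,\beta)$ gives $U_C\ket{\phi_0}=\ket{0}\otimes\ket{\mathrm{aux}}$ and $U_C\ket{\phi_1}=\ket{1}\otimes\ket{\mathrm{aux}}$. So $U_C$ maps $\mathcal H_L$ isometrically onto $\mathbb C^2\otimes\ket{\mathrm{aux}}$.

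Next differentiate Eq.~\eqref{eq:deterministic_kickback} at $\theta=0$, or equivalently use $e^{-i\theta P}=\cos\theta-i\sin\theta P$ since $P^2=1$. This gives $U_C P\ket{\psi}=(Z\otimes\mathbb 1)\,U_C\ket{\psi}$ for every $\ket{\psi}\in\mathcal H_L$. Hence $P$ preserves $\mathcal H_L$ and acts there as $Z_L:=U_C^\dagger(Z_1)U_C$.

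It remains to make the codespace stabilizer. If $\ket{\mathrm{aux}}$ is a stabilizer state with stabilizer group $\langle g_2,\dots,g_m\rangle$ on qubits $2,\dots,m$, then $\mathcal H_L$ is the joint $+1$ eigenspace of the Pauli operators $S_j=U_C^\dagger(\mathbb 1\otimes g_j)U_C$. This is an $[m,1]$ stabilizer code with logical operators $Z_L=U_C^\dagger Z_1U_C$ and $X_L=U_C^\dagger X_1U_C$, and $P$ agrees with $Z_L$ on $\mathcal H_L$. Differences between them are elements of the stabilizer group, because both act identically on the codespace.

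If $\ket{\mathrm{aux}}$ is not a stabilizer state, we instead take the subsystem-code reading. The pulled-back qubit-$2,\dots,m$ Pauli group acts only on the tensor factor carrying $\ket{\mathrm{aux}}$, so it forms gauge operators. This is exactly the ``up to gauge degrees of freedom'' clause. I would handle this case by noting that $\mathbb C^2\otimes\mathcal H_{\mathrm{aux}}$ is the codespace of the trivial subsystem code with gauge qubits $2,\dots,m$, and then pulling back by $U_C$.

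\emph{(2)$\Rightarrow$(1).} Let $\mathcal H_L$ be the codespace of an $[m,1]$ stabilizer code with stabilizer generators $S_2,\dots,S_m$, and let the logical operators be $X_L,Z_L$ with $P\simeq Z_L$ modulo the stabilizer group. Pick $\ket{\phi_0},\ket{\phi_1}$ as the $\pm1$ eigenstates of $Z_L$ in $\mathcal H_L$, fixing their relative phase by $X_L\ket{\phi_0}=\ket{\phi_1}$.

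The key lemma is the standard Clifford encoding theorem. The commutation relations of the tuple $(Z_L,X_L,S_2,\dots,S_m)$ are those of $(Z_1,X_1,Z_2,\dots,Z_m)$: the logical pair anticommutes, and everything else commutes and is independent. So there exists a Clifford $U_C$ with
\[
U_C Z_L U_C^\dagger=Z_1,\qquad U_C X_L U_C^\dagger=X_1,\qquad U_C S_j U_C^\dagger=\pm Z_j .
\]
One constructs it by extending the tuple to a full symplectic basis and then using transitivity of the Clifford group on symplectic bases, adjusting signs with Pauli corrections. Then $U_C\mathcal H_L=\mathbb C^2\otimes\ket{\mathrm{aux}}$, where $\ket{\mathrm{aux}}$ is a fixed computational basis state. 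We also have $U_C\ket{\phi_j}=\ket{j}\ket{\mathrm{aux}}$, where the phase convention is fixed by the $X_1$ condition.

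Since $P=Z_L S$ for some stabilizer element $S$, and $S$ acts as the identity on $\mathcal H_L$, we get $e^{-i\theta P}=e^{-i\theta Z_L}$ on $\mathcal H_L$. Conjugating by $U_C$ gives $e^{-i\theta Z_1}$, which yields Eq.~\eqref{eq:deterministic_kickback} for all $\alpha,\beta,\theta$. The gauge case works the same way, with $\ket{\mathrm{aux}}$ now fixed by also choosing a gauge state.

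\emph{Main obstacle.} The delicate step is the non-stabilizer $\ket{\mathrm{aux}}$ case in (1)$\Rightarrow$(2). There the honest conclusion is only a subsystem/gauge code, and the clause ``all other Pauli operators act trivially'' needs care. I would interpret it as saying that the Pauli operators preserving $\mathcal H_L$ act on it through the logical Pauli algebra, up to phases. I would justify this via the pulled-back tensor factorization, and keep the sign bookkeeping of the Clifford extension lemma explicit.
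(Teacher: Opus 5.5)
Your argument is correct. For (2)$\Rightarrow$(1) it follows the paper's route. A Clifford decoder sends the codespace to $\mathbb{C}^2\otimes\ket{\mathrm{aux}}$, and $P=Z_L S$ with $S$ a stabilizer, so $e^{-i\theta P}=e^{-i\theta Z_L}$ on $\mathcal H_L$. The paper only asserts that the decoder exists ``by definition''; your symplectic-basis extension with Pauli sign corrections actually constructs it.

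For (1)$\Rightarrow$(2) your route differs from the paper's in a way that matters. The paper claims that all Paulis commuting with $P$ act trivially on $\mathcal H_L$ and form the stabilizer group. That claim is false. $P$ itself commutes with $P$ but acts as $Z_L$. For the repetition code $\mathrm{span}\{\ket{00},\ket{11}\}$ with $P=Z_1$, both $X_2$ and $Z_2$ commute with $P$. Yet they anticommute with each other, and $X_2$ maps $\ket{00}$ out of $\mathcal H_L$.

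Your construction is the correct one. You read off $U_C\ket{\phi_j}=\ket{j}\otimes\ket{\mathrm{aux}}$ at $\theta=0$ and differentiate to get $P|_{\mathcal H_L}=U_C^\dagger Z_1U_C|_{\mathcal H_L}$. You then pull back the stabilizer group of $\ket{\mathrm{aux}}$. Finally, you note that two Paulis agreeing on a stabilizer codespace differ by a stabilizer element.

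Your case split is genuinely needed, not just cautious. Take $\ket{\phi_j}=\ket{j}\otimes\ket{T}$ with $\ket{T}=(\ket{0}+e^{i\pi/4}\ket{1})/\sqrt{2}$, $P=Z_1$, and $U_C=\mathbb{1}$. This satisfies (1). But the projector $\mathbb{1}\otimes\ket{T}\!\bra{T}$ is not of the form $(\mathbb{1}+g)/2$ for any Pauli $g$, so $\mathcal H_L$ is not a stabilizer codespace. Hence (1)$\Rightarrow$(2) holds only under your gauge reading, or with the added hypothesis that $\ket{\mathrm{aux}}$ is a stabilizer state. The paper's proof misses this point. Likewise, your reading of ``all other Paulis act trivially'' as ``Paulis preserving $\mathcal H_L$ act through the logical algebra up to phase'' is the version that can actually be proved.
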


\begin{proof}
We prove the equivalence in both directions.

\noindent{\bf Proof of $(1) \Rightarrow (2)$: Phase kickback implies an encoded logical qubit}

Assume Eq.~\eqref{eq:deterministic_kickback} holds for all $\alpha,\beta$. By linearity, the states $\ket{\phi_0}$ and $\ket{\phi_1}$ span a fixed two-dimensional subspace $\mathcal{H}_L \subset \mathcal{H}_P$ that is preserved under the action of $\exp({-i\theta P})$ up to a global Clifford decoding. The action of $\exp({-i\theta P})$ on $\mathcal{H}_L$ induces a continuous one-parameter unitary group generated by a Hermitian operator acting nontrivially within $\mathcal{H}_L$. Since $P$ is a Pauli operator, this generator must act as a logical Pauli operator, which we identify as $Z_L$. Conjugation by the Clifford unitary $U_C$ maps the physical Pauli $P$ to a single-qubit Pauli $Z$ acting on the decoded logical qubit, while all remaining degrees of freedom factorize into the auxiliary state $\ket{\mathrm{aux}}$. Consequently, all Pauli operators commuting with $P$ act trivially on $\mathcal{H}_L$, forming an abelian stabilizer group whose joint $+1$ eigenspace is precisely $\mathcal{H}_L$. Thus, $\mathcal{H}_L$ is the codespace of a stabilizer code encoding one logical qubit, with $P$ realizing the logical operator $Z_L$. This establishes statement (2).

\noindent{\bf Proof of $(2) \Rightarrow (1)$: Encoded logical qubit implies phase kickback}

Conversely, let $\mathcal{C}$ be a stabilizer code encoding one logical qubit with logical Pauli operators $\{X_L, Z_L\}$ and codespace $\mathcal{H}_L = \mathrm{span}\{\ket{\phi_0}, \ket{\phi_1}\}$. By definition, there exists a Clifford decoding unitary $U_C$ such that $U_C \ket{b}_L = \ket{b} \otimes \ket{0}^{\otimes (m-1)}, b = 0,1.$

Let $P$ be a physical Pauli operator representing $Z_L$ on the codespace. Then, for all $\ket{\psi}_L \in \mathcal{H}_L$, $\exp({-i\theta P}) \ket{\psi}_L = \exp({-i\theta Z_L}) \ket{\psi}_L.$ Applying $U_C$ yields $U_C\exp({-i\theta P}) \ket{\psi}_L
=
\left( \exp({-i\theta Z}) \ket{\psi} \right)
\otimes \ket{0}^{\otimes (m-1)},$
which is precisely Eq.~\eqref{eq:deterministic_kickback}, with a fixed auxiliary state independent of $\alpha$, $\beta$, and $\theta$. This establishes deterministic phase kickback using Clifford operations only, proving statement (1).
\end{proof}
The theorem shows that deterministic, amplitude-independent phase localization under Clifford control is neither accidental nor generic: it is an operational signature of an encoded logical qubit with Pauli-valued logical operators. In metrological and distributed-sensing settings, this equivalence explains the natural emergence of GHZ- and repetition-type encodings as minimal stabilizer realizations permitting Clifford-only access to logical phases.
\subsubsection{Clifford lensing of Ramsey metrology}
\label{App:PLCC of Ramsey metrology}
We illustrate Clifford lensing of the paradigmatic Ramsey metrology protocol with an $n$-qubit GHZ probe. The probe is initialized in $\ket{\mathrm{GHZ}_n}=(\ket{0}^{\otimes n}+\ket{1}^{\otimes n})/\sqrt{2}$, which belongs to the two-dimensional repetition-code subspace $\mathcal H_L=\mathrm{span}\{\ket{0}^{\otimes n},\ket{1}^{\otimes n}\}$ encoding a single logical qubit. Parameter embedding is generated by the collective spin operator $S_z=\sum_{j=1}^n Z_j/2$, whose action is confined to $\mathcal H_L$. Within this subspace the generator acts as $S_z=nZ_L/2$, where $Z_L$ denotes the logical Pauli operator defined by $Z_L\ket{0}^{\otimes n}=\ket{0}^{\otimes n}$ and $Z_L\ket{1}^{\otimes n}=-\ket{1}^{\otimes n}$. Consequently, the encoded unitary reduces to a single-logical-qubit phase rotation with an enhanced generator. A $\theta$-independent Clifford isometry $C$, constructed as a sequence of CNOT gates from one qubit onto the remaining $n-1$, deterministically disentangles the repetition-code subspace, mapping $\ket{0}^{\otimes n}\mapsto\ket{0}\otimes\ket{0}^{\otimes(n-1)}$ and $\ket{1}^{\otimes n}\mapsto\ket{1}\otimes\ket{0}^{\otimes(n-1)}$. Under this transformation, the logical operator $Z_L$ is mapped to a single-qubit Pauli operator, $C Z_L {\overline C} = Z\otimes{(|0\rangle\langle 0|)}^{\otimes(n-1)}$, and the parameter-dependent unitary (in the subspace ${\cal H}_L$) becomes $C \exp({-i\theta S_z}) {\overline C} = \exp[{-in\theta Z/2}\otimes (|0\rangle\langle 0|)^{\otimes(n-1)}]$. Thus, all $\theta$-dependence is localized onto one physical qubit, while the remaining qubits are decoupled in a fixed ancillary state. This realizes Clifford phase kickback, whereby the accumulated phase is coherently amplified by a factor $n$, yielding Heisenberg-limited scaling of the quantum Fisher information, $\mathcal F_Q\sim n^2$.
\subsubsection{Clifford lensing of surface code based metrology protocol}
\label{App:PLCC of surface code based metrology protocol}
In this section, we provide details of the sensing protocol based on a distance-2 surface code.  We employ the construction of the four-qubit surface code, its logical basis states, and the associated logical Pauli operators introduced in \citep{marques2022logical} and adapt this encoding for quantum metrology within our framework.

The logical code space is defined on four physical qubits and spanned by the logical basis states
$\ket{0}_L \equiv \left(\ket{0000}+\ket{1111}\right)/\sqrt{2},
\ket{1}_L \equiv \left(\ket{0101}+\ket{1010}\right)/\sqrt{2}.$ Logical Pauli operators act within this codespace; in particular, one choice of logical $Z$ operator is
$Z_L \equiv Z_1 Z_2,$
with an equivalent representation given by $Z_L \equiv Z_3 Z_4$, reflecting the redundancy of logical operator realizations within the code. An $n$-logical-qubit Greenberger–Horne–Zeilinger (GHZ) state is defined as
$\ket{\mathrm{GHZ}}_L
= 
\left(
\ket{0}_L^{\otimes n} + \ket{1}_L^{\otimes n}
\right)/\sqrt{2}.$ This state can be prepared deterministically using a Clifford circuit acting on the physical qubits of the $n$ code blocks. Explicitly, one first prepares each block in $\ket{0}_L$ and then applies a sequence of logical Clifford gates that entangle the logical qubits in the standard GHZ pattern. Since all logical operations admit Clifford realizations on the physical qubits, the entire preparation circuit remains within the Clifford group and is efficiently describable using stabilizer methods.

For metrological purposes, it is advantageous to express logical generators in terms of single-qubit physical operators. This can be achieved using the Clifford identity $CNOT_{i,i+1}\,(Z_i Z_{i+1})\,CNOT_{i,i+1} = Z_{i+1},$
which allows a two-qubit $Z$ operator to be mapped onto a single-qubit $Z$ operator via conjugation by a CNOT gate.
Applying an appropriate sequence of CNOT gates within each code block transforms the logical operator $Z_L$ into a single-qubit Pauli operator acting on a designated physical qubit. As a result, the logical phase encoding can be implemented by applying a physical phase rotation generated by $Z_i$ on a subset of physical qubits—one per logical block.

Subjecting the selected physical qubits to the unitary evolution $\exp({-i \theta Z_i/2})$ induces an effective logical evolution generated by $Z_L$. Acting on the logical GHZ state, this yields $\ket{\psi_\theta}
=
\left(
\ket{0}_L^{\otimes n}
+
e^{i n \theta}
\ket{1}_L^{\otimes n}
\right)/\sqrt{2},$
where the phase $n\theta$ reflects coherent signal accumulation across the $n$ logical qubits. The SLD associated with this family of states has eigenstates
$\ket{\lambda_\pm}
=
\left(
\ket{0}_L^{\otimes n}
\pm i \ket{1}_L^{\otimes n}
\right)/\sqrt{2},$
which define the optimal measurement basis for estimating $\theta$.

To extract the accumulated phase, one applies the inverse of the Clifford circuit used to prepare the logical GHZ state. This inverse circuit disentangles the logical qubits while coherently ``kicking back'' the total phase $n\theta$ onto the first logical qubit. Consequently, the metrologically relevant information is mapped onto a single logical degree of freedom, enabling optimal readout with a local measurement. The entire protocol admits a compact description within the stabilizer tableau formalism. Preparation, phase encoding, and inverse Clifford evolution correspond to updates of the stabilizer and destabilizer generators. In particular, the inverse Clifford circuit updates the tableau so as to disentangle the stabilizers associated with multi-logical-qubit correlations, while preserving the accumulated logical phase as a relative phase on the first logical qubit. 
\subsection{Clifford lensing of stabilizer-incompatible metrology protocols}
\label{App: hm:plcc_characterization_1}
\begin{theorem}[Characterization of Clifford lensing]
\label{thm:plcc_characterization_1_proof}
An $n$-qubit metrology protocol $\{\ket{\psi_0},G\}$ admits Clifford lensing if and only if there exist
a $\theta$-independent Clifford isometry
$C:\mathcal{H}_{2^n}\to\mathcal{H}_{2^k}\otimes\mathcal{H}_{2^{n-k}}$, with $k<n$,
a $k$-qubit probe state $\ket{\psi^{(k)}_0}$, an $(n-k)$-qubit auxiliary state
$\ket{\phi^{(n-k)}_{\mathrm{aux}}}$, and Hermitian operators
$G^{(k)}$ and $G^{(n-k)}_{\mathrm{aux}}$ such that $C\ket{\psi_0}
=
\ket{\psi^{(k)}_0}\otimes\ket{\phi^{(n-k)}_{\mathrm{aux}}},
C G \overline{C}
=
G^{(k)} \otimes G^{(n-k)}_{\mathrm{aux}},$ 
with $\ket{\phi^{(n-k)}_{\mathrm{aux}}}$ an eigenstate of
$G^{(n-k)}_{\mathrm{aux}}$.
\end{theorem}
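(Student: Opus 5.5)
The plan is to prove the two directions separately. Sufficiency is a direct computation. Necessity is obtained by differentiating the factorization in the definition of Clifford lensing at $\theta=0$ and then lifting the resulting local information to the global product form $CG\overline C=G^{(k)}\otimes G^{(n-k)}_{\rm aux}$.

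\emph{Sufficiency.} Assume the stated decomposition and let $G^{(n-k)}_{\rm aux}\ket{\phi^{(n-k)}_{\rm aux}}=g\ket{\phi^{(n-k)}_{\rm aux}}$. Since $C$ is unitary on its range,
\begin{equation}
C e^{-i\theta G}\overline C=e^{-i\theta\,G^{(k)}\otimes G^{(n-k)}_{\rm aux}} .
\end{equation}
Expanding the exponential as a power series and using the eigenvalue relation term by term gives
\begin{equation}
C\ket{\psi_\theta}=\bigl(e^{-i\theta g G^{(k)}}\ket{\psi^{(k)}_0}\bigr)\otimes\ket{\phi^{(n-k)}_{\rm aux}} .
\end{equation}
Hence $C\rho_\theta\overline C=\rho^{(k)}_\theta\otimes\rho_{\rm aux}$ with $\rho_{\rm aux}=\ket{\phi}\!\bra{\phi}$, which is $\theta$-independent, for every $\theta$ and not only near $0$. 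For the SLD, I use $\mathcal L=2i[\rho_0,G]$ from Appendix~\ref{App: Complexity of measurement of SLD}. Conjugating by $C$ gives
\begin{equation}
C\mathcal L\overline C=2i\bigl[\rho^{(k)}_0\otimes\ket{\phi}\!\bra{\phi},\,G^{(k)}\otimes G_{\rm aux}\bigr]=\bigl(2ig[\rho^{(k)}_0,G^{(k)}]\bigr)\otimes\ket{\phi}\!\bra{\phi} .
\end{equation}
This is the required product $\mathcal L^{(k)}\otimes\mathcal L^{(n-k)}$. As a check, the QFI becomes $4g^2\,\mathrm{Var}_{\psi^{(k)}_0}(G^{(k)})$, which equals $4\,\mathrm{Var}_{\psi_0}(G)$, as it must.

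\emph{Necessity, state part.} Evaluating the lensing condition at $\theta=0$ gives that $C\rho_0\overline C$ is a rank-one product. Each factor is therefore pure, so $C\ket{\psi_0}=\ket{\psi^{(k)}_0}\otimes\ket{\phi}$ up to phase. Next, write $\tilde G=CG\overline C$ and differentiate $C\rho_\theta\overline C$ at $\theta=0$:
\begin{equation}
-i[\tilde G,\rho^{(k)}_0\otimes\ket{\phi}\!\bra{\phi}]=\partial_\theta\rho^{(k)}_\theta\otimes\ket{\phi}\!\bra{\phi} .
\end{equation}
Here I used that the auxiliary factor is $\theta$-independent. Sandwiching this identity with $\mathbb 1\otimes(\mathbb 1-\ket{\phi}\!\bra{\phi})$ shows that $\tilde G\ket{\psi^{(k)}_0\phi}$ has no component outside $\mathcal H_{2^k}\otimes\ket{\phi}$. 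Because the factorization holds on a neighborhood of $0$ and $\theta\mapsto C\rho_\theta\overline C$ is analytic, the same argument applies at every higher derivative. Consequently $\tilde G$ preserves $\mathcal H_{2^k}\otimes\ket\phi$ on the whole Krylov space $\mathcal K=\mathrm{span}\{\tilde G^j\ket{\psi^{(k)}_0\phi}\}$, and restricted to $\mathcal K$ it equals $G^{(k)}\otimes\ket\phi\!\bra\phi$ for $G^{(k)}:=(\mathbb 1\otimes\bra\phi)\tilde G(\mathbb 1\otimes\ket\phi)$. The SLD condition is consistent with this: $C\mathcal L\overline C$ is rank two with both eigenvectors in $\mathcal H_{2^k}\otimes\ket\phi$, which forces $\mathcal L^{(n-k)}\propto\ket\phi\!\bra\phi$.

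\emph{Main obstacle: global product form.} The hardest step is promoting this Krylov-space statement to the global identity $\tilde G=G^{(k)}\otimes G^{(n-k)}_{\rm aux}$ demanded by the theorem. My first attempt will be to expand $\tilde G=\sum_\mu A_\mu\otimes B_\mu$ in an operator basis $\{B_\mu\}$ of the auxiliary system. I will choose the basis adapted to $\ket\phi$, so that the relevant elements are $\ket\phi\!\bra\phi$, the coherences $\ket\phi\!\bra{\phi^\perp_j}$ and their adjoints, and operators supported on $\ket\phi^\perp$. The derivative identities above kill the coherence terms. The aim is then to show that the block of $\tilde G$ acting on $\mathcal H_{2^k}\otimes\ket\phi^\perp$ can be taken of the form $G^{(k)}\otimes B$ without changing any quantity in the definition of Clifford lensing.

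I expect this last point to be exactly where a genuine additional hypothesis is needed. A generic $\tilde G$ may carry a component on $\mathcal H_{2^k}\otimes\ket\phi^\perp$ that does not factor, yet is invisible to $\rho_\theta$ and $\mathcal L$. If the reduction to $G^{(k)}\otimes B$ fails, I would record the precise condition under which it holds, namely that $[\tilde G,\mathbb 1\otimes\ket\phi\!\bra\phi]=0$ together with factorization of the complementary block. I would then check whether this condition follows from the Clifford and Pauli structure of $G$ in the intended examples, rather than claiming it in general.
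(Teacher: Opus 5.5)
Your sufficiency half is correct, and it is the paper's argument made explicit. The paper only says the auxiliary factor is ``metrologically inert,'' whereas you actually compute $C\ket{\psi_\theta}=(e^{-i\theta gG^{(k)}}\ket{\psi^{(k)}_0})\otimes\ket{\phi}$ and $C\mathcal L\overline C=2ig[\rho^{(k)}_0,G^{(k)}]\otimes\ket{\phi}\!\bra{\phi}$.

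On necessity, your hesitation is justified, and you should not look for the missing argument. The global factorization $CG\overline C=G^{(k)}\otimes G^{(n-k)}_{\mathrm{aux}}$ does not follow from the lensing definition, and the ``only if'' direction is false as stated. The paper does not derive it either; it simply asserts that ``preservation of the encoding statistics further requires'' this decomposition. A Pauli structure for $G$ does not rescue it.

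Here is a counterexample. Take $n=2$, $\ket{\psi_0}=\ket{+}\ket{0}$ and $G=Z_1+Z_2+Z_1Z_2$.
\begin{itemize}
\item $G$ acts as $2Z+\mathbb 1$ on $\mathcal H_2\otimes\ket{0}$ and as $-\mathbb 1$ on $\mathcal H_2\otimes\ket{1}$.
\item With $C=\mathbb 1$ one gets $\rho_\theta=\rho^{(1)}_\theta\otimes\ket{0}\!\bra{0}$ and $\mathcal L=4i[\ket{+}\!\bra{+},Z]\otimes\ket{0}\!\bra{0}$. So the protocol admits Clifford lensing with $k=1$ and $\mathcal F_Q=16$.
\item Yet $G$ has spectrum $\{3,-1,-1,-1\}$. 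Any $A\otimes B$ on $\mathbb C^2\otimes\mathbb C^2$ has eigenvalues $a_ib_j$ obeying $(a_1b_1)(a_2b_2)=(a_1b_2)(a_2b_1)$, which is impossible for this spectrum.
\item The case $k=0$ would force $\ket{\psi_0}$ to be an eigenvector of $G$, which it is not.
\end{itemize}
So no unitary at all, Clifford or not, realizes the right-hand side. This is exactly the mechanism you anticipated: the block on $\mathcal H_2\otimes\ket{\phi}^{\perp}$ is $-\mathbb 1$, not a multiple of $2Z+\mathbb 1$.

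What is provable is what your argument already gives. Lensing implies $C\ket{\psi_0}=\ket{\psi^{(k)}_0}\otimes\ket{\phi}$, and $CG\overline C$ acts on the Krylov space $\mathcal K$ as $G^{(k)}\otimes\ket{\phi}\!\bra{\phi}$. Taking $B=0$ then shows the protocol has the same $\rho_\theta$ and $\mathcal L$, for all $\theta$, as $\{\ket{\psi_0},\overline C(G^{(k)}\otimes\ket{\phi}\!\bra{\phi})C\}$, and that protocol does have the required form. So the characterization holds for generators modulo their action off the orbit. For the literal $G$ it holds only under an extra hypothesis like the one you propose.

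One correction to your sketch. The derivative identities annihilate the coherence block $(\mathbb 1\otimes(\mathbb 1-\ket{\phi}\!\bra{\phi}))\,CG\overline C\,(\mathbb 1\otimes\ket{\phi}\!\bra{\phi})$ only on $\mathcal K$, not on all of $\mathcal H_{2^k}\otimes\ket{\phi}$ (unless $\mathcal K$ fills that space). Hence even $[CG\overline C,\mathbb 1\otimes\ket{\phi}\!\bra{\phi}]=0$ is part of the extra hypothesis, not a consequence of lensing.
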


\begin{proof}
(\emph{Necessity})  
If the protocol admits Clifford lensing, there exists a $\theta$-independent Clifford isometry that preserves the measurement statistics associated with the SLD eigenbasis while rendering the optimal measurement local up to classical post-processing. This implies the existence of a Clifford isometry
$C:\mathcal{H}_{2^n}\to\mathcal{H}_{2^k}\otimes\mathcal{H}_{2^{n-k}}$
under which the SLD eigenbasis factorizes into an informative subsystem and a passive one. Consequently,
\(
C\ket{\psi_0}
=
\ket{\psi^{(k)}_0}\otimes\ket{\phi^{(n-k)}_{\mathrm{aux}}}.
\)
Preservation of the encoding statistics further requires the generator to decompose as
\(
C G \overline{C}
=
G^{(k)}\otimes G^{(n-k)}_{\mathrm{aux}}.
\)
Since the auxiliary subsystem carries no information about $\theta$, the state
$\ket{\phi^{(n-k)}_{\mathrm{aux}}}$ must be an eigenstate of
$G^{(n-k)}_{\mathrm{aux}}$.

\medskip
(\emph{Sufficiency})  
Conversely, suppose such a Clifford isometry, state decomposition, and generator factorization exist. Then the parameter embedding acts nontrivially only on the $k$-qubit probe, while the auxiliary subsystem remains in a fixed eigenstate and is therefore metrologically inert. Optimal measurements can thus be chosen on the probe subsystem alone, supplemented by arbitrary local measurements on the auxiliary qubits. As Clifford isometries map local Pauli measurements to local Pauli measurements up to classical post-processing, the full protocol admits Clifford lensing.
\end{proof}
\subsubsection{Clifford lensing of commuting projection based quantum metrology protocols}
\label{App: PLCC of commuting projection based quantum metrology protocols}

We elaborate on the commuting-projector construction underlying the example discussed in the main text. Consider an $n$-qubit Hilbert space $\mathcal H=(\mathbb C^2)^{\otimes n}$ and a family of mutually commuting projectors $\{P_\alpha\}_{\alpha=1}^{n-k}$ satisfying $P_\alpha^2=P_\alpha$ and $[P_\alpha,P_\beta]=0$. The code space is defined as the joint $+1$ eigenspace $\mathcal H_{\mathrm{code}}=\bigcap_{\alpha=1}^{n-k}\mathrm{Ker}(\mathbb 1-P_\alpha)$ and has dimension $\dim(\mathcal H_{\mathrm{code}})=2^k$. We assume that the probe state $\ket{\psi_0}$ lies entirely in this subspace, $P_\alpha\ket{\psi_0}=\ket{\psi_0}$ for all $\alpha$, and that the generator $G$ of the parameter encoding commutes with all projectors, $[G,P_\alpha]=0$, ensuring that the encoded state $\ket{\psi_\theta}=\exp({-i\theta G})\ket{\psi_0}$ remains confined to $\mathcal H_{\mathrm{code}}$.

Because the projectors commute, there exists a Clifford isometry $C:\mathcal H\to(\mathbb C^2)^{\otimes k}\otimes(\mathbb C^2)^{\otimes(n-k)}$ that simultaneously diagonalizes them, such that $CP_\alpha {\overline C}=\mathbb 1_{2^k}\otimes\ket{0}\!\bra{0}_\alpha$. Under this isometry, the code space is mapped to $(\mathbb C^2)^{\otimes k}\otimes\ket{\phi_{\mathrm{aux}}}$ with a fixed auxiliary stabilizer state $\ket{\phi_{\mathrm{aux}}}=\ket{0}^{\otimes(n-k)}$, and the initial probe decomposes as $C\ket{\psi_0}=\ket{\tilde\psi_0}\otimes\ket{\phi_{\mathrm{aux}}}$. All non-Clifford resource (magic) of the protocol is therefore localized on the $k$-qubit logical subsystem.

Since $G$ preserves the code space, it admits a block decomposition under $C$ of the form $CG{\overline C}=\sum_i \tilde G^{(i)}\otimes G^{(i)}_{\mathrm{aux}}$. Because $\ket{\phi_{\mathrm{aux}}}$ is an eigenstate of each $G^{(i)}_{\mathrm{aux}}$, with $G^{(i)}_{\mathrm{aux}}\ket{\phi_{\mathrm{aux}}}=\lambda_i\ket{\phi_{\mathrm{aux}}}$, the effective generator governing the parameter dependence is $\tilde G=\sum_i\lambda_i\tilde G^{(i)}$, and the encoded state factorizes as $Ce^{-i\theta G}{\overline C}(\ket{\tilde\psi_0}\otimes\ket{\phi_{\mathrm{aux}}})=e^{-i\theta\tilde G}\ket{\tilde\psi_0}\otimes\ket{\phi_{\mathrm{aux}}}$. Consequently, the quantum Fisher information is ${\cal F}_Q=4\,\mathrm{Var}(\tilde G)_{\tilde\psi_0}$.

\section{Metrologically sufficient channels}
\label{app:metrological_sufficiency}

Classical shadow tomography provides an efficient framework for predicting expectation values of a large family of observables using randomized measurements and classical post-processing. The shadow protocol introduced in \citep{huang2020predicting} is designed to be tomographically complete, enabling reconstruction of arbitrary observables at the cost of probing the full operator algebra. In quantum metrology, however, such completeness is unnecessary. The goal is instead to estimate an unknown parameter $\theta$ encoded in a family of quantum states $\rho_\theta =\exp({-i\theta G})\,\rho_0\,\exp({i\theta G}),$ 
where $G$ is a generator. Estimation performance is quantified by the QFI, which depends only on the local geometry of the statistical model in an infinitesimal neighborhood of $\theta$.
This motivates a weaker, task-oriented notion of completeness, tailored specifically to parameter estimation rather than full state reconstruction.

\subsection{Local Metrological Sufficiency}

\begin{definition}[Local metrological sufficiency]
A quantum preprocessing scheme—consisting of a physical quantum channel followed by measurement and classical post-processing—is said to be \emph{locally metrologically sufficient} at $\theta=0$ for the family $\{\rho_\theta\}$ if it preserves the local QFI,
$\mathcal{F}_Q(\rho_\theta)\big|_{\theta=0}
=
\mathcal{F}_Q(\mathcal{E}(\rho_\theta))\big|_{\theta=0}.$
\end{definition}

This notion is intentionally weaker than statistical sufficiency in the sense of Petz, as it requires preservation of a single local metric quantity rather than equivalence of full statistical experiments. For unitary parameterizations, the tangent operator at $\theta=0$ is $\dot{\rho}_0 := \partial_\theta \rho_\theta\big|_{\theta=0}
= -i[G,\rho_0].$
The local QFI depends only on the pair $(\rho_0,\dot{\rho}_0)$. Metrological sufficiency therefore requires faithful transmission of this tangent direction under the preprocessing scheme.


\subsection{Necessary Conditions for Metrological Sufficiency}
Let $\rho_\theta$ be a smooth one-parameter family of quantum states on a Hilbert space $\mathcal{H}$, with $\rho_0 := \rho_{\theta}\big|_{\theta=0}$. 
Let $G$ be a Hermitian operator (the generator of the parameter encoding), such that $\rho_\theta = e^{-i \theta G} \, \rho_0 \, e^{i \theta G}.$
Define the tangent operator at $\theta=0$ as
 $\dot{\rho}_0 := -i [G, \rho_0]$.
Let $\mathcal{M}$ be a quantum measurement channel, i.e., a completely positive trace-preserving (CPTP) map from operators on $\mathcal{H}$ to classical measurement outcomes (or, equivalently, to a classical-quantum state).
\begin{theorem}[Tangent-space injectivity]
A randomized measurement scheme associated with $\mathcal{M}$ can be locally metrologically sufficient at $\theta=0$ only if $\mathcal{M}(\dot{\rho}_0) \neq 0,$
i.e., $\mathcal{M}$ acts injectively on the tangent operator $\dot{\rho}_0$.
\end{theorem}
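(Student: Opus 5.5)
We argue by contraposition: if $\mathcal{M}(\dot\rho_0)=0$, then the QFI after the measurement channel vanishes at $\theta=0$. Since $\mathcal{F}_Q(\rho_\theta)|_{\theta=0}=4\,\mathrm{Var}_{\psi_0}(G)$ is generically strictly positive, equality with the post-channel value is then impossible, so sufficiency fails.

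\textbf{Tangent of the output family.} Write the output family as $\sigma_\theta:=\mathcal{M}(\rho_\theta)$. This is either a classical probability vector $p_\theta(x)=\mathrm{Tr}(\Pi_x\rho_\theta)$ or a classical-quantum state $\sum_x p_\theta(x)\ket{x}\!\bra{x}\otimes\rho^{(x)}$. For a randomized scheme, average over the ensemble index as an additional classical register. By linearity and continuity of $\mathcal{M}$, its tangent at the origin is
\begin{equation}
\partial_\theta\sigma_\theta\big|_{\theta=0}=\mathcal{M}(\dot\rho_0).
\end{equation}
Under the hypothesis $\mathcal{M}(\dot\rho_0)=0$, this derivative vanishes identically.

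\textbf{Vanishing of the output QFI.} The SLD of the output is defined by $\partial_\theta\sigma_\theta=(\sigma_\theta L'+L'\sigma_\theta)/2$. At $\theta=0$ the left-hand side is zero, so $L'=0$ on the support of $\sigma_0$ is a solution. Off the support, the SLD is conventionally set to zero. Hence
\[
\mathcal{F}_Q(\sigma_\theta)\big|_{\theta=0}=\mathrm{Tr}(\sigma_0L'^2)=0.
\]
For a purely classical output this is just the Fisher information $\sum_x(\partial_\theta p_\theta(x))^2/p_\theta(x)=0$.

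\textbf{Main obstacle: support changes.} The delicate point is when $\sigma_0$ has outcomes with $p_0(x)=0$. There the Fisher information is not continuous and can pick up contributions of the form $\lim(\partial p)^2/p$. I would handle this with the standard convention of defining the QFI via the SLD on the support, or equivalently as $\lim 8(1-F)/d\theta^2$. In the fidelity form, $F(\sigma_0,\sigma_{d\theta})=1-O(d\theta^4)$ when the first derivative vanishes, provided the second-order term stays within the support. A cleaner route is to invoke the monotonicity bound $\mathcal{F}_Q(\sigma)\le\mathcal{F}_Q(\rho)$, which holds with equality only if the tangent survives, and to restrict to full-rank output or regular models as the local-estimation setting of this section implicitly does.

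\textbf{Conclusion.} Comparing $0$ with $4\,\mathrm{Var}_{\psi_0}(G)>0$ gives the contradiction. Therefore sufficiency requires $\mathcal{M}(\dot\rho_0)\neq0$, i.e.\ that $\dot\rho_0$ does not lie in the kernel of $\mathcal{M}$.
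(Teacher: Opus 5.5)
Your argument is correct and is essentially the paper's: linearity gives $\partial_\theta\mathcal{M}(\rho_\theta)|_{\theta=0}=\mathcal{M}(\dot\rho_0)=0$, so the output Fisher information (classical, or SLD-defined as $\mathrm{Tr}(\sigma_0L'^2)$) vanishes and cannot equal a positive input QFI. Your explicit solution of the SLD equation on the support of $\sigma_0$ makes rigorous a step the paper only asserts, and your caveats are essential rather than cosmetic: the claim needs $\dot\rho_0\neq0$ as a hypothesis, and it fails under the continuous Bures definition (measuring $\cos\theta\ket{0}+\sin\theta\ket{1}$ in the computational basis gives $\mathcal{M}(\dot\rho_0)=0$ yet Fisher information $4$ for every small $\theta\neq0$), so rely on the SLD-on-support convention and drop the ``cleaner route'' via equality in monotonicity, which assumes what is to be proved.
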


\begin{proof}
If $\mathcal{M}(\dot{\rho}_0)=0$, then
\[
\left.\frac{d}{d\theta} \mathcal{M}(\rho_\theta)\right|_{\theta=0} = 0,
\]
so the output state $\mathcal{M}(\rho_\theta)$ is locally insensitive to the parameter $\theta$ at first order.

As a consequence, the classical Fisher information (CFI) associated with the measurement outcomes vanishes at $\theta=0$. Since the quantum Fisher information (QFI) upper bounds the CFI and is attained only when the measurement is optimal, this implies that no nonzero QFI can be preserved under $\mathcal{M}$ locally. Hence, local metrological sufficiency is impossible.
\end{proof}

This injectivity condition is necessary but not sufficient, since metrological sufficiency additionally requires preservation of the full Fisher information metric, not just nontrivial parameter sensitivity.

\begin{corollary}[Two-twirl obstruction]
Let $\mathcal{U}$ be an ensemble of unitaries acting on $\mathcal{H}$, and let $\Phi^{(2)}_{\mathcal{U}}$ denote the associated second-moment (two-copy) twirling channel defined by
\[
\Phi^{(2)}_{\mathcal{U}}(X) := \int_{\mathcal{U}} (U \otimes U)\, X \,({\overline U} \otimes {\overline U})\, d\mu(U),
\]
where $d\mu(U)$ is the probability measure over the ensemble.

A necessary condition for local metrological sufficiency is
\[
\Phi^{(2)}_{\mathcal{U}}\big( \rho_0 \otimes \dot{\rho}_0 \big) \neq 0.
\]
\end{corollary}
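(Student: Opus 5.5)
The plan is to obtain the corollary from Theorem (tangent-space injectivity) by contraposition. We show that if the second-moment twirl annihilates $\rho_0\otimes\dot\rho_0$, then every quantity through which the randomized measurement scheme could detect $\dot\rho_0$ vanishes. That forces the classical Fisher information at $\theta=0$ to vanish, so local metrological sufficiency fails.

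The first step is to express the relevant statistics of the scheme through the twirl. For a randomized measurement scheme built from the ensemble $\mathcal U$ with a fixed POVM $\{\Pi_b\}$, the first-order sensitivity of the outcome distribution is $\partial_\theta p_\theta(b|U)|_{0}=\mathrm{Tr}(\Pi_b U\dot\rho_0\overline U)$. The classical Fisher information averaged over the ensemble is
\[
\mathbb E_U\sum_b \frac{[\mathrm{Tr}(\Pi_b U\dot\rho_0\overline U)]^2}{\mathrm{Tr}(\Pi_b U\rho_0\overline U)}.
\]
We bound this below, and also relate it to a second-moment quantity, by pairing numerator and denominator. Using $\mathrm{Tr}(A)\mathrm{Tr}(B)=\mathrm{Tr}(A\otimes B)$, the cross terms $\mathbb E_U\,\mathrm{Tr}(\Pi_b U\rho_0\overline U)\,\mathrm{Tr}(\Pi_{b'}U\dot\rho_0\overline U)$ equal $\mathrm{Tr}\big[(\Pi_b\otimes\Pi_{b'})\,\Phi^{(2)}_{\mathcal U}(\rho_0\otimes\dot\rho_0)\big]$.

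The second step is the key estimate. By Cauchy--Schwarz,
\[
\Big(\sum_b |\mathrm{Tr}(\Pi_b U\dot\rho_0\overline U)|\Big)^2\le F_C(U)\sum_b \mathrm{Tr}(\Pi_b U\rho_0\overline U)=F_C(U),
\]
so a nonzero CFI requires nonvanishing first-order signals $\mathrm{Tr}(\Pi_b U\dot\rho_0\overline U)$. We then need to show that these signals, correlated with the populations $\mathrm{Tr}(\Pi_{b'}U\rho_0\overline U)$ as in the shadow estimator (which multiplies the observed signal by an inverse-channel weight), are captured entirely by $\Phi^{(2)}_{\mathcal U}(\rho_0\otimes\dot\rho_0)$. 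If this twirl is zero, every such correlation vanishes. In particular the unbiased shadow estimate of the SLD-weighted quantity $\mathrm{Tr}(\mathcal L\,\dot\rho_0)$, which equals the QFI, vanishes in expectation. Equivalently, the effective measurement channel $\mathcal M$ obtained after classical post-processing satisfies $\mathcal M(\dot\rho_0)=0$. Theorem (tangent-space injectivity) then rules out local metrological sufficiency.

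The main obstacle is the second step, passing from vanishing of a second-moment quantity to vanishing of the first-order signal itself. Individual terms $\mathrm{Tr}(\Pi_b U\dot\rho_0\overline U)$ may be nonzero and cancel only on average. The approach I would try first is to take the reconstruction map $\mathcal R_U$ of the partial shadow protocol as defining the effective channel, so that the shadow of $\dot\rho_0$ paired with $\rho_0$ is exactly a linear functional of $\Phi^{(2)}_{\mathcal U}(\rho_0\otimes\dot\rho_0)$.

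If that fails, the fallback is to use positivity. Since $\rho_0\otimes\rho_0\ge0$ and twirling is a mixed-unitary channel, one can write $\dot\rho_0=-i[G,\rho_0]$. Taking a partial trace of the twirled operator against $\mathcal L\otimes\mathbb 1$ should then reproduce $\mathbb E_U$ of the signal--population product. This would reduce the claim to the injectivity statement already proved.
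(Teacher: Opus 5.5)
\textbf{The gap is your Step 2, and it cannot be closed.} In the setting you adopt, the label $U$ is recorded. That is true of any shadow protocol, including the paper's $\mathcal R_U$ reconstruction. In this setting a vanishing twirl forces nothing to vanish, and the corollary is false. Here is a counterexample. On one qubit take $\rho_0=\ket{+}\bra{+}$ and $G=Z/2$, so $\dot\rho_0=Y/2$ and $\mathcal F_Q=1$. Let $\mathcal U=\{\mathbb{1},X\}$ with equal weights, and read out in the $Y$ eigenbasis. Because $X\rho_0X=\rho_0$ and $XYX=-Y$,
\[
\Phi^{(2)}_{\mathcal U}(\rho_0\otimes\dot\rho_0)=\tfrac12\bigl(\rho_0\otimes\tfrac{Y}{2}-\rho_0\otimes\tfrac{Y}{2}\bigr)=0 .
\]
Yet for either $U$ the outcome probabilities are $(1\pm\sin\theta)/2$ up to relabeling, with classical Fisher information exactly $1=\mathcal F_Q$. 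So the record $(U,b)$ carries the full QFI, and the scheme is locally metrologically sufficient in the paper's sense. The shadow channel also gives $\mathcal M(\dot\rho_0)=\dot\rho_0\neq 0$.

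Your cross terms $\mathbb E_U\,p_U(b)\,\partial_\theta p_U(b')$ do vanish in this example. They vanish only because the signal flips sign between the two ensemble members. The CFI is quadratic in the signal, so such cancellations do not affect it.

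The same example defeats both of your fixes:
\begin{itemize}
\item The shadow channel $\mathcal M(\dot\rho_0)=\mathbb E_U\sum_b\mathrm{Tr}(\Pi_bU\dot\rho_0\overline U)\,\overline U\Pi_bU$ is linear in $\dot\rho_0$ alone, so it is not a functional of $\rho_0\otimes\dot\rho_0$. Moreover, unbiasedness makes the $\theta$-derivative of the shadow estimate of $\mathrm{Tr}(\mathcal L\rho_\theta)$ have expectation $\mathrm{Tr}(\mathcal L\dot\rho_0)=\mathcal F_Q$, not $0$.
\item Contracting the twirl against $\mathcal L\otimes\mathbb{1}$ only controls an average of signal times weight. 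Positivity of $\rho_0\otimes\rho_0$ cannot turn that into control of the squared signal.
\end{itemize}
More generally, for any unitary $2$-design $\Phi^{(2)}_{\mathcal U}(\rho_0\otimes\dot\rho_0)=0$ identically. This follows because $\mathrm{Tr}\rho_0\,\mathrm{Tr}\dot\rho_0=0$ and $\mathrm{Tr}(\rho_0\dot\rho_0)=\tfrac12\partial_\theta\mathrm{Tr}\rho_\theta^2=0$. Meanwhile the Haar shadow channel $\mathcal M(X)=(X+\mathrm{Tr}(X)\mathbb{1})/(d+1)$ is injective on $\dot\rho_0$. So in this reading the twirl condition does not follow from the tangent-space injectivity theorem at all.

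\textbf{What the paper does, and what can be salvaged.} The paper gives no proof of this corollary. It offers only the heuristic that erasing $\rho_0$--$\dot\rho_0$ correlations erases metrological information, so there is no argument there that you are missing. The statement holds only if $U$ is discarded, so that the preprocessing is the first-moment twirl $\Phi^{(1)}_{\mathcal U}(\rho)=\int U\rho\overline U\,d\mu(U)$ followed by a fixed measurement. Since $\mathrm{Tr}(U\rho_0\overline U)=1$, one then has $\mathrm{Tr}_1\Phi^{(2)}_{\mathcal U}(\rho_0\otimes\dot\rho_0)=\Phi^{(1)}_{\mathcal U}(\dot\rho_0)$. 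A vanishing twirl therefore gives $\Phi^{(1)}_{\mathcal U}(\dot\rho_0)=0$, and the injectivity theorem applied to the composite channel completes the contraposition (for $\mathcal F_Q>0$). This partial-trace identity, not Cauchy--Schwarz on the CFI, is the link your contrapositive needs. However, it yields only a weakened consequence of the first-moment condition, and it does not apply to the randomized-shadow setting you set up.
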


This condition expresses that the second-moment structure of the unitary ensemble must preserve correlations between the state $\rho_0$ and its parameter-dependent tangent direction $\dot{\rho}_0$. If these correlations are erased by the twirling channel, then the measurement scheme cannot retain local metrological information.

\subsection{Measurements and Fisher Information}

Following preprocessing, a POVM $\{M_x\}$ produces outcome probabilities $p(x|\theta)$ with classical Fisher information $F_C(\theta)$.

\begin{theorem}[Measurement optimality]
For a fixed post-channel state $\sigma_\theta=\mathcal{E}(\rho_\theta)$, a measurement achieves ${\cal F}_C(\theta) = \mathcal{F}_Q(\sigma_\theta)$ 
if the POVM resolves the eigenbasis of the SLD of $\sigma_\theta$. Commutation with the SLD is sufficient but not necessary.
\end{theorem}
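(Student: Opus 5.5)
The claim to establish is that if the POVM $\{M_x\}$ resolves the eigenbasis of the SLD $\mathcal L_\sigma$ of $\sigma_\theta=\mathcal E(\rho_\theta)$, then ${\cal F}_C(\theta)=\mathcal F_Q(\sigma_\theta)$. The plan is to take the standard Braunstein--Caves chain of inequalities that bounds the CFI by the QFI and show that every inequality in it becomes an equality under the resolving hypothesis. Concretely, write $p(x|\theta)=\mathrm{Tr}(M_x\sigma_\theta)$. Using the defining relation $\partial_\theta\sigma_\theta=(\sigma_\theta\mathcal L_\sigma+\mathcal L_\sigma\sigma_\theta)/2$, we get $\partial_\theta p(x|\theta)=\mathrm{Re}\,\mathrm{Tr}(\sigma_\theta M_x\mathcal L_\sigma)$. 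Substituting this into ${\cal F}_C=\sum_x(\partial_\theta p)^2/p$ gives exactly the POVM expression already written in the appendix on SLD complexity, now applied to $\sigma_\theta$ rather than $\rho_\theta$.

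Next comes the upper bound. Write $\mathrm{Tr}(\sigma M_x\mathcal L_\sigma)=\mathrm{Tr}\big[(M_x^{1/2}\sigma^{1/2})^\dagger (M_x^{1/2}\mathcal L_\sigma\sigma^{1/2})\big]$. The chain is then
\[
(\mathrm{Re}\,z_x)^2\le|z_x|^2\le \mathrm{Tr}(M_x\sigma)\,\mathrm{Tr}(\sigma\mathcal L_\sigma M_x\mathcal L_\sigma),
\]
by Cauchy--Schwarz. Summing over $x$ with $\sum_xM_x=\mathbb 1$ yields ${\cal F}_C\le\mathrm{Tr}(\sigma\mathcal L_\sigma^2)=\mathcal F_Q(\sigma_\theta)$.

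For sufficiency I will take ``resolves the eigenbasis'' to mean that each $M_x$ is a function of the SLD, or more generally is block diagonal in its eigenspaces, and the simplest case is $M_x=\Pi_x$, the spectral projectors of $\mathcal L_\sigma$. In that case $M_x\mathcal L_\sigma=\lambda_xM_x$. This makes $M_x^{1/2}\mathcal L_\sigma\sigma^{1/2}=\lambda_xM_x^{1/2}\sigma^{1/2}$, so the Cauchy--Schwarz step is saturated. It also makes $z_x=\lambda_x\mathrm{Tr}(\sigma M_x)$ real, so the first inequality is saturated as well. Summing then gives ${\cal F}_C=\sum_x\lambda_x^2\,\mathrm{Tr}(\sigma M_x)=\mathrm{Tr}(\sigma\mathcal L_\sigma^2)$ directly. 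The same argument covers any POVM whose elements commute with $\mathcal L_\sigma$ and refine its spectral decomposition, because each element then still satisfies $M_x\mathcal L_\sigma=\lambda_xM_x$; this handles the ``commutation is sufficient'' clause.

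The main obstacle is the ``not necessary'' clause, and more precisely the handling of outcomes with $p(x|\theta)=0$, where the Fisher information sum is ill defined and the saturation condition degenerates. For non-necessity I plan a counterexample in the pure-state setting already analysed in the SLD-complexity appendix, where $\mathcal L$ is rank two on $\mathrm{span}\{\ket{\psi},\ket{\chi}\}$. Equality holds exactly when $M_x^{1/2}\sigma^{1/2}\propto M_x^{1/2}\mathcal L\sigma^{1/2}$ with a real proportionality constant, and that is satisfied for $\ket{\psi}$ by any projective measurement containing $(\ket{\psi}+e^{i\varphi}\ket{\chi})/\sqrt2$ with $\varphi\ne\pm\pi/2$ in a suitable real form. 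An example is the basis $\{(\ket{\psi}\pm\ket{\chi})/\sqrt2\}$ away from $\theta$-stationary points, which does not commute with $\mathcal L$. I would first check the simplest case directly: a qubit with $\sigma_\theta=\ket{\psi_\theta}\bra{\psi_\theta}$, where a measurement in a basis rotated relative to the SLD eigenbasis still attains $4\,\mathrm{Var}(G)$. If this check fails because of phase conventions, I would instead cite the known fact that for pure states the saturating POVMs form a larger family than the SLD eigenbasis.
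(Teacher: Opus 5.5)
Your sufficiency argument is the paper's argument: the SLD identity gives $\partial_\theta p(x|\theta)=\mathrm{Re}\,\mathrm{Tr}(\sigma_\theta M_x\mathcal L_\sigma)$, and Cauchy--Schwarz in the $\sigma$-weighted Hilbert--Schmidt product gives ${\cal F}_C\le\mathcal F_Q$. You also check, more explicitly than the paper, that both the real-part step and the Cauchy--Schwarz step saturate when $M_x\mathcal L_\sigma=\lambda_x M_x$. Keep exactly that as the hypothesis. Your aside ``or more generally is block diagonal in its eigenspaces'' is too weak: the trivial POVM $\{\mathbb 1\}$ commutes with $\mathcal L_\sigma$ yet gives ${\cal F}_C=0$. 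So the clause ``commutation is sufficient'' only holds in your refined sense, where each element is supported in a single eigenspace.

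The genuine gap is the non-necessity example, and it fails for a structural reason, not a phase convention. For $\ket{\phi_\pm}=(\ket{\psi}\pm e^{i\varphi}\ket{\chi})/\sqrt2$ one finds $p_\pm=1/2$ and $\partial_\theta p_\pm=\mp\sqrt{\mathrm{Var}(G)}\,\sin\varphi$. Hence ${\cal F}_C=4\,\mathrm{Var}(G)\sin^2\varphi$, so equality holds only at $\varphi=\pm\pi/2$, i.e.\ for the SLD eigenbasis itself, and your $\varphi=0$ basis carries no information at all.

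Your own criterion already shows this: $\bra{\phi_+}\mathcal L\ket{\psi}/\braket{\phi_+|\psi}=-2i\sqrt{\mathrm{Var}(G)}\,e^{-i\varphi}$ is real only for $\varphi=\pm\pi/2$. Concretely, for $e^{-i\theta Z/2}\ket{+}$ at $\theta=0$ the SLD is $Y$, and your $\varphi=0$ basis is the $Z$ basis, whose statistics do not depend on $\theta$.

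The fix is to vary the amplitudes at fixed relative phase $i$, rather than the phase at fixed amplitudes. Take the basis $\{\cos\alpha\ket{\psi}+i\sin\alpha\ket{\chi},\ \sin\alpha\ket{\psi}-i\cos\alpha\ket{\chi}\}$. Then $p=\cos^2\alpha,\sin^2\alpha$ and $\partial_\theta p=\mp 2\sqrt{\mathrm{Var}(G)}\sin\alpha\cos\alpha$, so ${\cal F}_C=4\,\mathrm{Var}(G)(\sin^2\alpha+\cos^2\alpha)=\mathcal F_Q$ for every $\alpha\in(0,\pi/2)$. These projectors commute with $\mathcal L$ only at $\alpha=\pi/4$. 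In the qubit example this means measuring along any equatorial axis $\cos\beta\,X+\sin\beta\,Y$ with $\sin\beta\neq0$.

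All outcome probabilities here are nonzero, so this example also sidesteps the $p(x|\theta)=0$ subtlety and removes the need for your fallback citation. The paper itself only asserts non-necessity without giving an example, so with this correction your proof would be more complete than the paper's.
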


This optimality condition depends only on the final quantum state and is independent of the specific shadow reconstruction protocol.
\begin{proof}
For a POVM $\{M_x\}$, the outcome probabilities are
$p(x|\theta) = \mathrm{Tr}\big[ \sigma_\theta M_x \big].$
The classical Fisher information (CFI) is
$\mathcal{F}_C(\theta) = \sum_x {\big( \partial_\theta p(x|\theta) \big)^2}/{p(x|\theta)}.$
Using the definition of the SLD, we write
$\partial_\theta p(x|\theta) 
= \mathrm{Tr}\big[ (\partial_\theta \sigma_\theta) M_x \big]
=  \mathrm{Tr}\big[ (L_\theta \sigma_\theta + \sigma_\theta L_\theta) M_x \big]/2
= \mathrm{Re}\,\mathrm{Tr}\big[ \sigma_\theta L_\theta M_x \big].$ Substituting into the CFI,
$\mathcal{F}_C(\theta) 
= \sum_x {\big( \mathrm{Re}\,\mathrm{Tr}[\sigma_\theta L_\theta M_x] \big)^2}/{\mathrm{Tr}[\sigma_\theta M_x]}.$

By applying the Cauchy--Schwarz inequality in the Hilbert--Schmidt inner product weighted by $\sigma_\theta$, one obtains the quantum Cramér--Rao bound
$\mathcal{F}_C(\theta) \le \mathcal{F}_Q(\sigma_\theta),$
where
$\mathcal{F}_Q(\sigma_\theta) = \mathrm{Tr}\big[ \sigma_\theta L_\theta^2 \big].$

Equality holds if and only if the Cauchy--Schwarz inequality is saturated for all $x$, which occurs precisely when the measurement operators $M_x$ project onto subspaces that resolve the eigenbasis of $L_\theta$. In particular, this is achieved if $M_x$ are projectors onto the eigenvectors of $L_\theta$.

A sufficient (but not necessary) condition for this is that all $M_x$ commute with $L_\theta$, i.e., $[M_x, L_\theta]=0$, which ensures simultaneous diagonalizability. However, optimal measurements need only resolve the SLD eigenspaces and need not commute with $L_\theta$ globally.

Therefore, $\mathcal{F}_C(\theta) = \mathcal{F}_Q(\sigma_\theta)$ whenever the POVM resolves the eigenbasis of the SLD.
\end{proof}

This optimality condition depends only on the final quantum state $\sigma_\theta$ and its SLD $L_\theta$, and is independent of the specific protocol (e.g., shadow reconstruction) used to obtain measurement data.

Metrological sufficiency provides a principled relaxation of tomographic completeness, rooted in the local geometry of quantum statistical models. In the context of randomized measurements and classical shadows, this relaxation manifests as constraints on the second-moment structure of the unitary ensemble rather than invertibility on the full operator algebra.

These results indicate that \emph{metrologically sufficient shadow protocols} can achieve optimal parameter sensitivity with reduced sampling overhead, provided the generator-induced tangent space is preserved. This motivates the design of \emph{metrological shadows} tailored explicitly to sensing tasks rather than full state reconstruction.

\begin{theorem}[Shadow channel for ensemble measurements]
Let $M \in L(\mathbb{C}^d)$ be a traceless observable, $\operatorname{tr} M = 0$, and let $\mu_H$ denote the Haar measure on $U(d)$. Define the ensemble shadow channel $\mathcal{M}(\rho)
\equiv
\mathbb{E}_{U \sim \mu_H}
\big[\operatorname{tr}(U \rho \overline{U}\, O)\, \overline{U} O U \big].$
Then $\mathcal{M}$ admits the affine decomposition $\mathcal{M}(\rho) = c_{\mathbb{1}}\,\mathbb{1} + c_{\mathbb{F}}\,\rho,$
with coefficients
$c_{\mathbb{1}} = - \operatorname{tr}(O^2)/{(d(d^2-1))},
 c_{\mathbb{F}} =  \operatorname{tr}(O^2)/{d^2-1}.$
\end{theorem}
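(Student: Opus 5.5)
The plan is to rewrite $\mathcal{M}$ as a partial trace of a two-copy Haar twirl and evaluate that twirl with the standard second-moment (Schur--Weyl) formula. I read the observable $M$ in the hypothesis and the operator $O$ in the definition of $\mathcal{M}$ as the same traceless Hermitian operator. Set $O_U := \overline{U} O U$. For any $\rho$,
\[
\operatorname{tr}(U\rho\overline{U} O)\,O_U = \operatorname{tr}_1\!\big[(\rho\otimes\mathbb{1})(O_U\otimes O_U)\big],
\]
since $\operatorname{tr}(U\rho\overline{U}O)=\operatorname{tr}(\rho O_U)$. Averaging over $U$ and using linearity of the partial trace gives
\[
\mathcal{M}(\rho)=\operatorname{tr}_1\!\big[(\rho\otimes\mathbb{1})\,\mathcal{T}(O\otimes O)\big],
\]
where $\mathcal{T}(X):=\mathbb{E}_{U\sim\mu_H}[(\overline{U}\otimes\overline{U})X(U\otimes U)]$. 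By left/right invariance of $\mu_H$, the substitution $U\mapsto\overline{U}$ does not change the average. So $\mathcal{T}$ is the usual two-fold Haar twirl, which is the same object as $\Phi^{(2)}_{\mathcal{U}}$ in the Two-twirl corollary with $\mathcal{U}=U(d)$.

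Next I would invoke Schur--Weyl duality. The image of $\mathcal{T}$ commutes with every $V\otimes V$, so it lies in $\operatorname{span}\{\mathbb{1},\mathbb{F}\}$, where $\mathbb{F}$ is the swap. Hence $\mathcal{T}(O\otimes O)=\alpha\mathbb{1}+\beta\mathbb{F}$. The twirl preserves the traces against $\mathbb{1}$ and $\mathbb{F}$, because both commute with $U\otimes U$. Using $\operatorname{tr}\mathbb{1}=d^2$, $\operatorname{tr}\mathbb{F}=d$ and $\operatorname{tr}\mathbb{F}^2=d^2$, this gives the linear system
\[
(\operatorname{tr}O)^2=\alpha d^2+\beta d,\qquad \operatorname{tr}(O^2)=\alpha d+\beta d^2 .
\]
Tracelessness makes the first equation homogeneous, so $\alpha=-\beta/d$. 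Substituting into the second gives $\beta(d^2-1)=\operatorname{tr}O^2$. Therefore $\beta=\operatorname{tr}(O^2)/(d^2-1)$ and $\alpha=-\operatorname{tr}(O^2)/(d(d^2-1))$, for $d\ge 2$ so that the system is nondegenerate.

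Finally I would take the partial trace. The identities $\operatorname{tr}_1[(\rho\otimes\mathbb{1})\mathbb{1}]=(\operatorname{tr}\rho)\mathbb{1}$ and $\operatorname{tr}_1[(\rho\otimes\mathbb{1})\mathbb{F}]=\rho$ give
\[
\mathcal{M}(\rho)=\alpha(\operatorname{tr}\rho)\,\mathbb{1}+\beta\rho .
\]
For a state, $\operatorname{tr}\rho=1$, which yields the claimed affine decomposition with $c_{\mathbb{1}}=\alpha$ and $c_{\mathbb{F}}=\beta$. The computation is routine. The only points needing care are the following:
\begin{itemize}
\item justifying the commutant/Schur--Weyl step, or equivalently citing the order-two Weingarten formula;
\item making sure the conjugation convention ($\overline{U}OU$ versus $UO\overline{U}$) is absorbed by Haar invariance;
\item recording that for a general operator the identity coefficient carries the factor $\operatorname{tr}\rho$.
\end{itemize}
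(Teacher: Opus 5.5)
Your argument is correct and follows essentially the paper's route. The paper invokes Haar invariance and Schur--Weyl duality to get the form $c_{\mathbb{1}}\mathbb{1}+c_{\mathbb{F}}\rho$ and then fixes the coefficients by standard second-moment Haar integrals; your partial-trace rewriting and the trace conditions against $\mathbb{1}$ and $\mathbb{F}$ are the details its one-line sketch (which loosely speaks of evaluating ``$\mathcal{M}(\mathbb{F})$'') leaves implicit.

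Your bookkeeping of the factor $\operatorname{tr}\rho$ is worth keeping. Since $c_{\mathbb{1}}d+c_{\mathbb{F}}=0$, it gives $\mathcal{M}(\mathbb{1})=0$, so the formula is genuinely affine only on trace-one inputs, and any inverse must be taken on that slice rather than on all of $L(\mathbb{C}^d)$.
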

\begin{proof}[Proof]
By Haar invariance and Schur--Weyl duality, $\mathcal{M}$ acts diagonally on the identity and traceless subspaces, yielding the affine form 
$\mathcal{M}(\rho)=c_{\mathbb{1}}\mathbb{1}+c_{\mathbb{F}}\rho$.
The coefficients follow from evaluating $\mathcal{M}(\mathbb{1})$ and $\mathcal{M}({\mathbb F})$ using standard Haar integrals.
\end{proof}

\begin{corollary}[Invertibility and tomographic completeness]
The ensemble shadow channel $\mathcal{M}$ is invertible on $L(\mathbb{C}^d)$, with inverse $
\mathcal{M}^{-1}(X) = {(X - c_{\mathbb{1}}\,\mathbb{1})}/{c_{\mathbb{F}}}.$
Consequently, ensemble measurements of $\langle O \rangle_U$ are tomographically complete.
\end{corollary}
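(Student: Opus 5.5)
The plan is to read the affine decomposition of the preceding theorem as the linear statement $\mathcal{M}(X)=c_{\mathbb{1}}\operatorname{tr}(X)\,\mathbb{1}+c_{\mathbb{F}}X$ for all $X\in L(\mathbb{C}^d)$. This is the unique linear extension that reduces to $c_{\mathbb{1}}\mathbb{1}+c_{\mathbb{F}}\rho$ on density operators. I will then invert this map directly, in three steps.

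\textbf{Step 1.} I will show that $c_{\mathbb{F}}\neq 0$. Since $O$ is Hermitian and nonzero, $\operatorname{tr}(O^2)=\|O\|_2^2>0$. For $d\ge 2$ this gives $c_{\mathbb{F}}=\operatorname{tr}(O^2)/(d^2-1)>0$.

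\textbf{Step 2.} I will restrict to the unit-trace states the shadow is applied to, $\operatorname{tr}\rho=1$. There $\mathcal{M}(\rho)=c_{\mathbb{1}}\mathbb{1}+c_{\mathbb{F}}\rho$, so solving for $\rho$ gives $\rho=(\mathcal{M}(\rho)-c_{\mathbb{1}}\mathbb{1})/c_{\mathbb{F}}$. This is exactly the stated formula for $\mathcal{M}^{-1}$. On the traceless subspace the map is simply multiplication by $c_{\mathbb{F}}$, so it is injective there as well. Both facts together show that the data $\mathbb{E}_U[\langle O\rangle_U\,\overline{U}OU]$ determine $\rho$ uniquely.

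\textbf{Step 3.} Tomographic completeness then follows. Each single-shot estimator $\hat\rho_U=\mathcal{M}^{-1}\big(\langle O\rangle_U\,\overline{U}OU\big)$ is linear in the data. Its expectation equals $\rho$, so $\mathbb{E}_U\operatorname{tr}(A\hat\rho_U)=\operatorname{tr}(A\rho)$ for every observable $A$, not just those in a restricted subspace $\mathcal{S}$.

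The main obstacle is the identity direction. The coefficients give $d\,c_{\mathbb{1}}+c_{\mathbb{F}}=0$, and consistently $\mathcal{M}(\mathbb{1})=\mathbb{E}_U[\operatorname{tr}(O)\,\overline{U}OU]=0$ because $O$ is traceless. So $\mathcal{M}$ annihilates $\mathbb{1}$ and is not literally invertible on all of $L(\mathbb{C}^d)$. I would therefore state the corollary precisely: $\mathcal{M}$ is invertible on the traceless subspace, and the stated formula is the inverse of the affine map restricted to the hyperplane $\operatorname{tr}X=1$. In that hyperplane the lost identity component is restored by the known normalization. This is all that tomographic completeness for states requires.
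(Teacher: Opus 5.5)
Your argument follows the paper's own route: use $c_{\mathbb{F}}\neq 0$ and solve the affine relation for $\rho$. Your caveat is also correct, and it is sharper than the paper's proof: since $\operatorname{tr}O=0$ one has $\mathcal{M}(\mathbb{1})=0$, i.e.\ $d\,c_{\mathbb{1}}+c_{\mathbb{F}}=0$, so $\mathcal{M}$ is not invertible on all of $L(\mathbb{C}^d)$ as the paper asserts. Instead it acts as $c_{\mathbb{F}}$ times the identity on traceless operators, and the stated formula is an affine inverse valid only on the trace-one hyperplane.

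That restricted statement is exactly what tomographic completeness for states and the later unbiasedness lemma require. One small point: $\hat\rho_U=\mathbb{1}/d+\langle O\rangle_U\,\overline{U}OU/c_{\mathbb{F}}$ is affine rather than linear in the data. This is harmless, because expectations commute with affine maps.
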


\begin{proof}[Proof]
Since $c_{\mathbb{F}}\neq 0$, $\mathcal{M}$ is invertible on $L(\mathbb{C}^d)$, with inverse obtained by solving the affine relation.
Invertibility implies tomographic completeness of ensemble measurements.
\end{proof}
\begin{definition}[Classical shadow for ensemble measurements]
Given a unitary $U$ drawn from $\mu_H$, the associated classical shadow estimator is defined as
 $\hat{\rho} \equiv \mathcal{M}^{-1}(\overline{U} O U).$
\end{definition}

\begin{lemma}[Unbiasedness of the shadow estimator]
The estimator $\hat{\rho}$ is unbiased, i.e., $\mathbb{E}_{U \sim \mu_H}[\hat{\rho}] = \rho.$
\end{lemma}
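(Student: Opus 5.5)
The proof is a direct consequence of the preceding theorem and its corollary. The key is to read the estimator with the measured ensemble value included: in each run one records the pair $(U,\langle O\rangle_U)$, with $\langle O\rangle_U=\operatorname{tr}(U\rho\overline{U}O)$. The shadow is then $\hat\rho=\mathcal{M}^{-1}\big(\langle O\rangle_U\,\overline{U}OU\big)$, which is the natural reading of the definition since $\mathcal{M}$ itself is built from exactly this weighted operator. The plan has three steps. First, write $\mathcal{M}^{-1}$ as a genuinely linear map. Second, pull the Haar expectation through it. Third, recognise the resulting average as $\mathcal{M}(\rho)$.

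\emph{Step 1 (linear form of the channel).} The affine decomposition $\mathcal{M}(\rho)=c_{\mathbb{1}}\mathbb{1}+c_{\mathbb{F}}\rho$ is stated for density operators. I would rewrite it as $\mathcal{M}(X)=c_{\mathbb{1}}\operatorname{tr}(X)\,\mathbb{1}+c_{\mathbb{F}}X$, valid for all $X\in L(\mathbb{C}^d)$. This is linear in $X$ and agrees with the stated form on unit-trace inputs.

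As a consistency check on the coefficients, tracing the definition of $\mathcal{M}$ gives
\[
\operatorname{tr}\mathcal{M}(X)=\mathbb{E}_U\big[\operatorname{tr}(UX\overline{U}O)\operatorname{tr}O\big]=0 .
\]
Comparing with the linear form forces $c_{\mathbb{1}}d+c_{\mathbb{F}}=0$, which matches $c_{\mathbb{1}}=-c_{\mathbb{F}}/d$. It then follows that
\[
\mathcal{M}^{-1}(Y)=\big(Y-c_{\mathbb{1}}\operatorname{tr}(Y)\,\mathbb{1}/(c_{\mathbb{F}}+c_{\mathbb{1}}d)\big)/c_{\mathbb{F}} .
\]
Because $c_{\mathbb{1}}d+c_{\mathbb{F}}=0$, this formula is only usable on traceless $Y$, the image of $\mathcal{M}$. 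On that image it reduces to $Y/c_{\mathbb{F}}$ plus a fixed identity correction $-(c_{\mathbb{1}}/c_{\mathbb{F}})\mathbb{1}$ chosen to restore unit trace. Either way the inverse is a fixed map independent of $U$, and it acts linearly on the image of $\mathcal{M}$ up to that fixed additive identity term.

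\emph{Step 2 (exchange expectation and inverse).} Because $\mathcal{M}^{-1}$ does not depend on $U$, and is linear up to a $U$-independent constant, the Haar expectation commutes with it:
\[
\mathbb{E}_{U\sim\mu_H}[\hat\rho]=\mathcal{M}^{-1}\Big(\mathbb{E}_{U\sim\mu_H}\big[\operatorname{tr}(U\rho\overline{U}O)\,\overline{U}OU\big]\Big).
\]

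\emph{Step 3 (identify the average).} The expectation inside the brackets is precisely the definition of $\mathcal{M}(\rho)$. Applying $\mathcal{M}^{-1}\circ\mathcal{M}=\mathrm{id}$ from the invertibility corollary then gives $\mathbb{E}_U[\hat\rho]=\rho$.

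I expect the main obstacle to be the bookkeeping of the identity component. The stated inverse $\mathcal{M}^{-1}(X)=(X-c_{\mathbb{1}}\mathbb{1})/c_{\mathbb{F}}$ is affine, and $\mathcal{M}$ annihilates the identity because $\operatorname{tr}O=0$. So one must check explicitly that the constant $-c_{\mathbb{1}}\mathbb{1}/c_{\mathbb{F}}$ survives averaging unchanged; it does, since the expectation of a constant is the constant. One must also check that the $\operatorname{tr}(\rho)=1$ normalization is what makes the affine inverse exact, so that the identity part of $\rho$ is restored correctly rather than being inferred from the measurement data. I would handle this by splitting $\rho=\mathbb{1}/d+\rho_0$ with $\rho_0$ traceless. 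The traceless part $\rho_0$ is recovered exactly from the data via the factor $1/c_{\mathbb{F}}$. The identity part is supplied deterministically by the trace-one constraint, and I would verify it against the constant $-c_{\mathbb{1}}/c_{\mathbb{F}}=1/d$.
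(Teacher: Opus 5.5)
Your argument is the paper's own one-line proof: you pull the Haar average through the $U$-independent map $\mathcal{M}^{-1}$, recognize the average as $\mathcal{M}(\rho)$, and conclude $\mathbb{E}_U[\hat\rho]=\mathcal{M}^{-1}(\mathcal{M}(\rho))=\rho$. Your two refinements are genuine fixes to the paper rather than a different route. First, including the weight $\operatorname{tr}(U\rho\overline{U}O)$ in $\hat\rho$ is what makes that identification valid: the paper's literal definition omits it, and $\mathbb{E}_U[\overline{U}OU]=0$. Second, since $\mathcal{M}(\mathbb{1})=0$, the inverse is only the affine map $Y\mapsto (Y-c_{\mathbb{1}}\mathbb{1})/c_{\mathbb{F}}$ on the unit-trace slice, not a linear inverse on all of $L(\mathbb{C}^d)$ as the paper's corollary claims; as you note, this affine map still commutes with the expectation.
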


\begin{proof}[Proof]
By definition $\hat{\rho}=\mathcal{M}^{-1}(\overline{U}OU)$, and taking the Haar average yields
$\mathbb{E}[\hat{\rho}]=\mathcal{M}^{-1}(\mathcal{M}(\rho))=\rho$.
\end{proof}

\begin{corollary}[Unbiased estimation of observables]
For any observable $A \in L(\mathbb{C}^d)$, define
$\hat{a} \equiv \operatorname{tr}(A \hat{\rho}).$
Then $\hat{a}$ is an unbiased estimator of $\operatorname{tr}(A \rho)$, satisfying $\mathbb{E}[\hat{a}] = \operatorname{tr}(A \rho).$
\end{corollary}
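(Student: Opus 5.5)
The plan is to use only linearity of the trace together with the unbiasedness of the shadow estimator proved in the preceding Lemma. The random quantity is the classical shadow $\hat{\rho}=\mathcal{M}^{-1}(\overline{U} O U)$, which depends on the Haar-random unitary $U$. The estimator $\hat{a}=\operatorname{tr}(A\hat{\rho})$ is the image of $\hat{\rho}$ under the fixed linear functional $X\mapsto \operatorname{tr}(AX)$ on $L(\mathbb{C}^d)$.

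The first step is to justify exchanging the expectation with this functional. We write $\mathbb{E}_{U\sim\mu_H}[\operatorname{tr}(A\hat{\rho})]=\operatorname{tr}(A\,\mathbb{E}_{U\sim\mu_H}[\hat{\rho}])$. The exchange is legitimate because $U(d)$ is compact and $\mu_H$ is a probability measure. Moreover, $U\mapsto \mathcal{M}^{-1}(\overline{U}OU)=(\overline{U}OU-c_{\mathbb{1}}\mathbb{1})/c_{\mathbb{F}}$ is a continuous, hence bounded, matrix-valued function, so the expectation is a finite-dimensional Bochner integral and commutes with any linear map. Concretely, we expand $\operatorname{tr}(A\hat{\rho})=\sum_{ij}A_{ji}\hat{\rho}_{ij}$ and use linearity of the scalar integral entry by entry.

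The second step substitutes the Lemma, $\mathbb{E}[\hat{\rho}]=\rho$, which gives $\mathbb{E}[\hat{a}]=\operatorname{tr}(A\rho)$. The Lemma in turn rests on the Corollary on invertibility: $c_{\mathbb{F}}=\operatorname{tr}(O^2)/(d^2-1)\neq 0$ whenever $O\neq 0$, and $\mathcal{M}^{-1}$ is linear and so commutes with the Haar average. If needed, I would note explicitly that $\mathcal{M}^{-1}$ is affine-looking but acts linearly once the constant term is tied to the trace. Under this reading, $\mathcal{M}^{-1}\circ\mathcal{M}=\mathrm{id}$ holds on all of $L(\mathbb{C}^d)$.

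There is essentially no obstacle here. The only point needing care is making sure the inverse map is treated consistently as a linear map. The decomposition $\mathcal{M}(\rho)=c_{\mathbb{1}}\mathbb{1}+c_{\mathbb{F}}\rho$ should really read $c_{\mathbb{1}}\operatorname{tr}(\rho)\mathbb{1}+c_{\mathbb{F}}\rho$. Correspondingly, $\mathcal{M}^{-1}(X)=(X-c_{\mathbb{1}}\operatorname{tr}(X)\mathbb{1}/(c_{\mathbb{1}}d+c_{\mathbb{F}}))/c_{\mathbb{F}}$, which on trace-one inputs reproduces the stated formula. I would flag this so that interchanging $\mathbb{E}$ with $\mathcal{M}^{-1}$ in the Lemma, and with $\operatorname{tr}(A\,\cdot)$ here, is rigorous. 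The conclusion holds for arbitrary, not necessarily Hermitian, $A$, since only linearity is used.
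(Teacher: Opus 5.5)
Your core step is exactly the paper's one-line proof: you pull $\operatorname{tr}(A\,\cdot\,)$ through the Haar average and invoke the Lemma $\mathbb{E}[\hat\rho]=\rho$. Your justification of the interchange (bounded continuous integrand on a compact group, entrywise linearity) is fine.

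The problem is the ``repair'' of $\mathcal{M}^{-1}$ that you add. Your linear inverse divides by $c_{\mathbb{1}}d+c_{\mathbb{F}}$. With the stated coefficients, $c_{\mathbb{1}}d+c_{\mathbb{F}}=-\operatorname{tr}(O^2)/(d^2-1)+\operatorname{tr}(O^2)/(d^2-1)=0$. This is forced, not a coincidence. Since $\operatorname{tr}O=0$, we have $\operatorname{tr}\mathcal{M}(X)=\mathbb{E}[\operatorname{tr}(UX\overline{U}O)]\operatorname{tr}O=0$ for every $X$. Hence $\mathcal{M}(X)=c_{\mathbb{F}}\bigl(X-\tfrac{\operatorname{tr}X}{d}\mathbb{1}\bigr)$ is a multiple of the projection onto traceless operators, and $\mathcal{M}(\mathbb{1})=0$. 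So no linear map satisfies $\mathcal{M}^{-1}\circ\mathcal{M}=\mathrm{id}$ on $L(\mathbb{C}^d)$. Your claim to the contrary is false, and so is the invertibility corollary you lean on, as literally stated. Your remark that the formula ``on trace-one inputs reproduces the stated formula'' conflates the trace of the input $X$ (always $0$ for outputs of $\mathcal{M}$) with the trace of the state.

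The correct way to make the Lemma rigorous is not to linearize. Instead, use the affine reconstruction $R(Y)=(Y-c_{\mathbb{1}}\mathbb{1})/c_{\mathbb{F}}=Y/c_{\mathbb{F}}+\mathbb{1}/d$, which satisfies $R(\mathcal{M}(\rho))=\rho$ precisely because $\operatorname{tr}\rho=1$. Because $\mu_H$ is a probability measure, expectations commute with affine maps as well as linear ones, i.e.\ $\mathbb{E}[R(Y)]=R(\mathbb{E}[Y])$. So the interchange you worried about is legitimate without any modification.

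If you audit the Lemma, fix one more thing: the single-shot shadow must carry the measured value, $\hat\rho=R\bigl(\operatorname{tr}(U\rho\overline{U}O)\,\overline{U}OU\bigr)$. With $\hat\rho=\mathcal{M}^{-1}(\overline{U}OU)$ as written, $\mathbb{E}[\overline{U}OU]=\operatorname{tr}(O)\mathbb{1}/d=0$, and the average would be $\mathbb{1}/d$ independent of $\rho$.

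With these corrections the Lemma holds for density matrices, and your trace-linearity step gives the corollary exactly as in the paper.
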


\begin{proof}[Proof]
Linearity of the trace gives $\mathbb{E}[\hat{a}]=\operatorname{tr}(A\,\mathbb{E}[\hat{\rho}])=\operatorname{tr}(A\rho)$.
\end{proof}
\subsection{Shadow tomography with collective unitaries} In many experimental many-body platforms, unitary control is intrinsically collective: a given operation can be applied either identically to all qubits or not at all. When arbitrary elements of the full Clifford group $\mathrm{Cl}(2^n)$ are accessible, one recovers standard classical shadow tomography. Here we focus instead on the experimentally relevant regime in which only collective operations are available.

Let $\mathcal{S}\subset \mathrm{Cl}(2^n)$ denote the set of implementable unitaries. In platforms such as nuclear magnetic resonance, atomic ensembles, and cavity QED systems, physically realizable Clifford operations are restricted to collective single-qubit rotations, motivating the permutationally invariant subset $\mathcal{S}_{\mathrm{p}} := \{ U^{\otimes n} : U \in \mathrm{Cl}(2) \}.$
We consider a shadow-tomographic protocol based on random collective unitaries sampled from $\mathcal{S}_{\mathrm{p}}$, followed by ensemble measurements.

Formally, this protocol induces a randomized measurement channel
\begin{equation}
\mathcal{M}(\rho)
:=
\mathbb{E}_{U \sim \mu}
\!\left[
\operatorname{Tr}\!\left(M\, U^{\otimes n} \rho \, {\overline U}^{ \otimes n}\right)
\, {\overline U}^{ \otimes n} M\, U^{\otimes n}
\right],
\end{equation}
where $M$ is an experimentally accessible collective observable (e.g., total magnetization) and $\mu$ is a probability measure over the allowed unitaries. For continuous control, $\mu$ may be taken as the Haar measure on $SU(2)$; equivalently, the same channel is realized by sampling from a unitary $2$-design supported on $\mathrm{Cl}(2)$. Such collective shadows are not tomographically complete, owing to the restricted unitary control and ensemble-averaged readout. Nevertheless, they may be metrologically sufficient.

\subsection{Collective Clifford Classical Shadows with Collective $S_z$ Readout}
\label{app:collective_clifford_shadows}

In this appendix, we analyze a classical shadow tomography protocol based on collective Clifford twirling and a collective spin measurement. We construct the measurement channel in its correct form, exploit permutational and $\mathrm{SU}(2)$ symmetry, characterize its action on the collective operator algebra, and derive polynomial bounds on the variance of the reconstructed expectation values.

Let $\rho \in \mathcal L((\mathbb C^2)^{\otimes n})$ be an arbitrary $n$-qubit state.  
Let $\mu$ denote the uniform distribution over the single-qubit Clifford group $\mathrm{Cl}(1)$, and define the collective action
$U^{(n)} := U^{\otimes n},  \overline U^{(n)} := U^{\dagger \otimes n}.$
We consider measurements of the collective spin operator
$S_z := \sum_{i=1}^n Z_i/2.$
The measurement (shadow) channel is defined as
$\mathcal M(\rho)
:= \mathbb E_{U\sim\mu}
\Big[
\operatorname{Tr}\!\big(
U^{(n)} \rho\, \overline U^{(n)} S_z
\big)\,
\overline U^{(n)} S_z U^{(n)}
\Big].$
This map is linear, completely positive, and trace-non-increasing. It  corresponds to the linear classical-shadow channel associated with a collective observable readout, rather than a discrete-outcome POVM.
As in the general classical shadows formalism, the measurement channel admits a Choi--Jamiołkowski representation
$\mathcal M(\rho)
= \operatorname{Tr}_1
\!\left[
(\rho \otimes \mathbb I)\,
\Omega^{(2)}
\right],$
with the second-moment operator $\Omega^{(2)}
:= \mathbb E_{U\sim\mu}
\Big[
\big(U^{(n)} \otimes U^{(n)}\big)
(S_z \otimes S_z)
\big(\overline U^{(n)} \otimes \overline U^{(n)} \big)
\Big].$
Because both $U^{\otimes n}$ and $S_z$ commute with all permutations of the qubits, $\Omega^{(2)}$ is invariant under the action of the symmetric group $S_n$. Consequently, $\mathcal M$ preserves the algebra of permutationally invariant operators.

The single-qubit Clifford group forms a unitary $2$-design. Therefore,
$\mathbb E_{U\sim\mu}
\big[
U Z U \otimes \overline U Z  \overline U 
\big]
= \sum_{\alpha=x,y,z}
\sigma_\alpha \otimes \sigma_\alpha/3.$ Using $S_z = \sum_iZ_i/2$ and linearity of the collective twirl, this identity extends to the collective setting, yielding 
$\Omega^{(2)}
= 
\sum_{\alpha=x,y,z}
S_\alpha \otimes S_\alpha/3,$
up to terms proportional to the identity which do not contribute to expectation-value reconstruction for traceless observables. It follows that  $\mathcal M$ is $\mathrm{SU}(2)$-covariant on the collective operator algebra
$\mathcal A_{\mathrm{coll}} := \mathrm{Alg}\{S_x,S_y,S_z\}.$ By $\mathrm{SU}(2)$ covariance and Schur--Weyl duality, $\mathcal M$ acts diagonally on irreducible tensor operators of fixed total spin. In particular, for any homogeneous polynomial $O_k \in \mathcal A_{\mathrm{coll}}$ of degree $k$, $\mathcal M(O_k) = \lambda_k\, O_k,$
where $\lambda_k>0$ depends only on $k$ and the system size $n$.
Since $\|S_\alpha\| = {\cal O}(n)$ and $\Omega^{(2)} = {\cal O}(n^2)$ in operator norm, the eigenvalues scale as
$\lambda_k = {\cal O}(n^k).$
Thus, $\mathcal M$ is invertible on $\mathcal A_{\mathrm{coll}}$.
We define the inverse shadow map on $\mathcal A_{\mathrm{coll}}$ by
 $\mathcal M^{-1}(O_k) := \lambda_k^{-1} O_k,$
extended linearly.
Given a single realization of the randomized measurement, the corresponding classical shadow is
$\hat\rho := \mathcal M^{-1}
\big(
\overline U^{(n)} S_z U^{(n)}
\big).$
For any observable $O \in \mathcal A_{\mathrm{coll}}$, we define the estimator
 $\hat O := \operatorname{Tr}(O\hat\rho).$ 
By construction,
$\mathbb E[\hat O] = \operatorname{Tr}(O\rho),$
so the estimator is unbiased.
The variance of the estimator is
$\mathrm{Var}(\hat O)
= \mathbb E[\hat O^2] - \operatorname{Tr}(O\rho)^2.$
As in the standard classical shadows analysis, the second moment is governed by a third-moment operator
\begin{equation}
\label{eq:Omega3}
\Omega^{(3)}
:= \mathbb E_{U\sim\mu}
\Big[
\big(U^{(n)} \otimes  U^{(n)} \otimes  U^{(n)}\big)
(S_z^{\otimes 3})
\big(\overline U^{(n)} \otimes \overline U^{(n)} \otimes \overline U^{(n)}\big)
\Big].
\end{equation}

Using permutation invariance and Clifford symmetry, $\Omega^{(3)}$ has operator norm $O(n^3)$. Together with
$\|\mathcal M^{-1}(O_k)\| = {\cal O}(n^k),$
this yields
$\mathrm{Var}(\hat O) = {\cal O}(n^{2k}) \text{(single shot)}.$
For $N$ independent repetitions,
$\mathrm{Var}(\hat O_N)
= {\cal O}\!\left({n^{2k}}/{N}\right).$
Collective Clifford classical shadows with collective $S_z$ readout define a measurement channel that is diagonal and invertible on the permutationally invariant operator algebra. Consequently, expectation values of collective observables that are polynomial in the collective spin operators can be estimated with variance scaling polynomially in system size.

\section{Theory of Experimental Shadow Metrology using NMR}
\label{App: Theory of experimental shadow metrology using NMR}



We begin with the single--qubit equatorial state
$\ket{\psi_\theta}
=\bigl(\ket{0}+e^{-i\theta}\ket{1}\bigr)/\sqrt{2},$
whose Bloch vector lies in the equatorial plane. Any single--qubit
Clifford unitary $U$ acts as a signed permutation of the Pauli axes and
therefore maps $\ket{\psi_\theta}$ into one of two canonical forms. If
the equatorial plane is preserved, the state remains equatorial, $U\ket{\psi_\theta}
=\bigl(\ket{0}+e^{-i(\pm\theta+k\pi/2)}\ket{1}\bigr)/{\sqrt{2}},
 k\in{0,1,2,3},$
corresponding to a discrete rotation and possible reflection of
$\theta$. If instead $U$ maps the equator to a meridian (as for Cliffords
containing a Hadamard), the amplitudes become unequal,
\begin{equation}
U\ket{\psi_\theta}
=e^{i\phi}
\Bigl(
\cos\frac{\pm\theta+k\pi/2}{2}\ket{0}
+i\sin\frac{\pm\theta+k\pi/2}{2}\ket{1}
\Bigr),
\end{equation}
with $\phi$ an irrelevant global phase. Thus Clifford operations never
introduce new functional dependence on $\theta$, but only discrete
shifts, reflections, or a halving of the phase.

This structure extends directly to multipartite GHZ--type states. For
three qubits, consider
$\ket{\Psi_\theta}
=\bigl(\ket{000}+e^{-i3\theta}\ket{111}\bigr)/{\sqrt{2}},$
and the collective action $U^{(3)}=U^{\otimes3}$. If $U$ preserves the
computational basis (i.e.\ $U=S^k$ up to phase), the output remains
GHZ--like,
$U^{(3)}\ket{\Psi_\theta}
=\bigl(\ket{000}
+e^{-i(3\theta-3k\pi/2)}\ket{111}\bigr)/{\sqrt{2}}.$
If $U$ maps $Z$ to $X$ or $Y$, writing
$U\ket{0}=(\ket{0}+e^{i\alpha}\ket{1})/\sqrt{2}$ and
$U\ket{1}=(\ket{0}-e^{i\alpha}\ket{1})/\sqrt{2}$, one finds
\begin{equation}
U^{(3)}\ket{\Psi_\theta}
=\frac{1}{2^{3/2}}
\sum_{x\in\{0,1\}^3}
\bigl[1+(-1)^{|x|}e^{-i3\theta}\bigr]
e^{i\alpha|x|}\ket{x},
\end{equation}
so that the phase $3\theta$ appears solely through parity--dependent
interference between even and odd excitation sectors.

The same reasoning applies to odd $n$--qubit GHZ states
$\ket{\Psi_\theta^{(n)}}
=
\bigl(\ket{0}^{\otimes n}+e^{-in\theta}\ket{1}^{\otimes n}\bigr)/{\sqrt{2}}, n\ \text{odd}.$
For $U^{(n)}=U^{\otimes n}$, Cliffords preserving $Z$ induce only a rigid
rotation of the collective phase,
$U^{(n)}\ket{\Psi_\theta^{(n)}}
=\bigl(\ket{0}^{\otimes n}
+e^{-i(n\theta-nk\pi/2)}\ket{1}^{\otimes n}\bigr)/{\sqrt{2}},$
while Cliffords mapping $Z$ to $X$ or $Y$ yield
\begin{equation}
U^{(n)}\ket{\Psi_\theta^{(n)}}
=\frac{1}{2^{n/2}}
\sum_{x\in{0,1}^n}
\bigl[1+(-1)^{|x|}e^{-in\theta}\bigr]
e^{i\alpha|x|}\ket{x}.
\end{equation}

In all cases, single--qubit Clifford operations preserve the functional
form of the embedded phase: $\theta$ enters only through the collective
factor $\exp({-in\theta})$, modified by discrete shifts and parity--dependent
signs. This constraint explains the robustness of phase information
under Clifford twirling and the persistence of sensitivity to $\theta$.

This structure is directly reflected in our NMR experiments, which were performed for phase–embedded states with
$n=1, 3, 5, 7, 9, 11, 13, 15$ qubits. In each case we prepare the GHZ–type state
$\ket{\Psi_\theta^{(n)}}
=\bigl(\ket{0}^{\otimes n}+e^{-in\theta}\ket{1}^{\otimes n}\bigr)/{\sqrt{2}},$ 
and apply collective single–qubit Clifford operations
$U^{(n)}=U^{\otimes n}$, implemented in NMR as global radio–frequency
pulses acting identically on all spins. As shown above, such collective
Cliffords cannot alter the functional dependence on $\theta$: they
only induce discrete phase shifts or convert the collective phase
$\exp({-in\theta})$ into parity–dependent interference between even– and
odd–excitation sectors.

We exploit this feature using Clifford lensing to coherently refocus the collective phase onto a single
spin. By applying three collective Clifford transformations, the parity–dependent interference terms
are converted into an effective phase kickback onto the first qubit
proportional to $n\theta$. The resulting transverse magnetization of
phase, while the Clifford twirl suppresses incoherent contributions.

The observed NMR signal across all odd system sizes $n=1$ to $15$
thus provides a direct experimental confirmation of the theoretical
result: collective Clifford dynamics preserve the encoded phase and,
through Clifford lensing, concentrate it into a low–weight single–qubit observable,
enabling robust and scalable extraction of $\theta$.
%


\end{document}